\ifx\SoCG\undefined%
\def\UseBibLatex{1}
\documentclass[12pt]{article}%
\providecommand{\SoCGVer}[1]{}%
\providecommand{\NotSoCGVer}[1]{#1}%
\else%
\makeatletter
\def\input@path{{lipics/}{../lipics/}}
\makeatother
\documentclass[a4paper,USenglish,cleveref,autoref,thm-restate]%
{lipics-v2021}

\hideLIPIcs
\providecommand{\SoCGVer}[1]{#1}%
\providecommand{\NotSoCGVer}[1]{}%
\fi

\ifx\sarielComp\undefined%
    \newcommand{\SarielComp}[1]{} \newcommand{\NotSarielComp}[1]{#1}%
\else
    \newcommand{\SarielComp}[1]{#1}%
    \newcommand{\NotSarielComp}[1]{}%
\fi \newcommand{\IfPrinterVer}[2]{#2}%

\newcommand{\UsePackage}[1]{%
  \IfFileExists{styles/#1.sty}{%
      \usepackage{styles/#1}%
   }{%
      \IfFileExists{../styles/#1.sty}{%
         \usepackage{../styles/#1}%
      }{%
         \usepackage{#1}%
      }%
   }%
}

\NotSoCGVer{%
	\usepackage[in]{fullpage}%
}

\usepackage{amsmath}%
\usepackage{amssymb}%
\usepackage{bbm}%
\usepackage{caption}%
\usepackage{xcolor}%
\usepackage{mathtools}%
\usepackage{commath}

\usepackage{scalerel}%

\SarielComp{\usepackage{sariel_colors}}%

\NotSoCGVer{%
	\usepackage[amsmath,thmmarks]{ntheorem}%
	\theoremseparator{.}%
}%

\usepackage{titlesec}%
\titlelabel{\thetitle. }%

\titleformat{\paragraph}[runin]
{\normalfont\bfseries}
{\theparagraph}
{1em}
{\addperiod}

\newcommand{\addperiod}[1]{#1.}

\setcounter{secnumdepth}{5}

\makeatletter
\patchcmd{\ttlh@hang}{\parindent\z@}{\parindent\z@\leavevmode}{}{}
\patchcmd{\ttlh@hang}{\noindent}{}{}{}
\makeatother

\usepackage{bbm}
\usepackage{titlesec}%
\titlelabel{\thetitle. }%
\usepackage{xcolor}%
\usepackage{mleftright}%
\usepackage{xspace}%
\usepackage{hyperref}%

\IfPrinterVer{%
   \usepackage{hyperref}%
}{%
   \usepackage{hyperref}%
   \hypersetup{%
      breaklinks,%
      colorlinks=true,%
      urlcolor=[rgb]{0.25,0.0,0.0},%
      linkcolor=[rgb]{0.5,0.0,0.0},%
      citecolor=[rgb]{0,0.2,0.445},%
      filecolor=[rgb]{0,0,0.4},
      anchorcolor=[rgb]={0.0,0.1,0.2}%
   }
   \usepackage[ocgcolorlinks]{ocgx2}
}

\usepackage{textgreek}

\providecommand{\BibLatexMode}[1]{}
\providecommand{\BibTexMode}[1]{#1}

\ifx\UseBibLatex\undefined%
  \renewcommand{\BibLatexMode}[1]{}
  \renewcommand{\BibTexMode}[1]{#1}
\else
  \renewcommand{\BibLatexMode}[1]{#1}
  \renewcommand{\BibTexMode}[1]{}
\fi

\BibLatexMode{%
   \usepackage[bibencoding=utf8,style=alphabetic,backend=biber]{biblatex}%
   \UsePackage{sariel_biblatex}%
}

\NotSoCGVer{%

\theoremstyle{plain}%
\newtheorem{theorem}{Theorem}[section]

\newtheorem{lemma}[theorem]{Lemma}

\newtheorem{corollary}[theorem]{Corollary}

\theoremstyle{plain}%
\theoremheaderfont{\sf} \theorembodyfont{\upshape}%
\newtheorem*{remark:unnumbered}[theorem]{Remark}%
\newtheorem{remark}[theorem]{Remark}%
\newtheorem{definition}[theorem]{Definition}

\newcommand{\myqedsymbol}{\rule{2mm}{2mm}}

\theoremheaderfont{\em}%
\theorembodyfont{\upshape}%
\theoremstyle{nonumberplain}%
\theoremseparator{}%
\theoremsymbol{\myqedsymbol}%
\newtheorem{proof}{Proof:}%

}

\definecolor{blue25emph}{rgb}{0, 0, 11}
\providecommand{\emphic}[2]{%
   \textcolor{blue25emph}{%
      \textbf{\emph{#1}}}%
   \index{#2}}

\providecommand{\emphi}[1]{\emphic{#1}{#1}}

\definecolor{almostblack}{rgb}{0, 0, 0.3}
\providecommand{\emphw}[1]{{\textcolor{almostblack}{\emph{#1}}}}%

\newcommand{\atgen}{\symbol{'100}} \newcommand{\EliotThanks}[1]{%
   \thanks{%
      Department of Computer Science; University of Illinois; 201
      N. Goodwin Avenue; Urbana, IL, 61801, USA; \\%
      {\tt erobson2\atgen{}illinois.edu}; {\tt
         \url{https://eliotwrobson.github.io/}.} #1}}
\newcommand{\SarielThanks}[1]{\thanks{Department of Computer Science;
      University of Illinois; 201 N. Goodwin Avenue; Urbana, IL,
      61801, USA; \\%
      {\tt sariel\atgen{}illinois.edu}; {\tt
         \url{http://sarielhp.org/}.} #1}}

\newcommand{\etal}{\textit{et~al.}\xspace}
\newcommand{\HLink}[2]{\hyperref[#2]{#1~\ref*{#2}}}
\newcommand{\HLinkSuffix}[3]{\hyperref[#2]{#1\ref*{#2}{#3}}}

\newcommand{\thmlab}[1]{{\label{theo:#1}}}
\renewcommand{\thmref}[1]{\HLink{Theorem}{theo:#1}}

\newcommand{\remlab}[1]{\label{rem:#1}}

\newcommand{\defrefY}[2]{\hyperref[def:#1]{#2}}

\newcommand{\apndlab}[1]{\label{apnd:#1}}
\newcommand{\apndref}[1]{\HLink{Appendix}{apnd:#1}}

\newcommand{\lemlab}[1]{\label{lemma:#1}}
\renewcommand{\lemref}[1]{\HLink{Lemma}{lemma:#1}}%

\newcommand{\seclab}[1]{\label{section:#1}}
\renewcommand{\secref}[1]{\HLink{Section}{section:#1}}%

\providecommand{\eqlab}[1]{}%
\renewcommand{\eqlab}[1]{\label{equation:#1}}
\newcommand{\Eqref}[1]{\HLinkSuffix{Eq.~(}{equation:#1}{)}}

\providecommand{\remove}[1]{}%
\newcommand{\Set}[2]{\left\{ #1 \;\middle\vert\; #2 \right\}}
\newcommand{\pth}[2][\!]{\mleft({#2}\mright)}%
\newcommand{\pbrcx}[1]{\left[ {#1} \right]}%
\newcommand{\ProbLTR}{\mathbb{P}}%
\newcommand{\Prob}[1]{\mathop{\ProbLTR} \mleft[ #1 \mright]}%

\newcommand{\Ex}[2][\!]{\mathop{\mathbb{E}}#1\pbrcx{#2}}
\newcommand{\ExCond}[2]{\ExSym\!\left[%
       #1 \;\middle\vert\; #2 \right]}
\newcommand{\ExChar}{\mathbb{E}}%
\newcommand{\ExSym}{\mathop{\ExChar}}%
\newcommand{\Var}[1]{\mathop{\mathbb{V}}\!\pbrcx{#1}}

\newcommand{\ceil}[1]{\left\lceil {#1} \right\rceil}
\newcommand{\floor}[1]{\left\lfloor {#1} \right\rfloor}

\newcommand{\cardin}[1]{\left| {#1} \right|}%

\renewcommand{\th}{th\xspace}

\renewcommand{\Re}{\mathbb{R}}%
\newcommand{\eps}{\varepsilon}

\usepackage[inline]{enumitem}

\newlist{compactenumA}{enumerate}{5}%
\setlist[compactenumA]{topsep=0pt,itemsep=-1ex,partopsep=1ex,parsep=1ex,%
   label=(\Alph*)}%

\newlist{compactenuma}{enumerate}{5}%
\setlist[compactenuma]{topsep=0pt,itemsep=-1ex,partopsep=1ex,parsep=1ex,%
   label=(\alph*)}%

\newlist{compactenumI}{enumerate}{5}%
\setlist[compactenumI]{topsep=0pt,itemsep=-1ex,partopsep=1ex,parsep=1ex,%
   label=(\Roman*)}%

\newlist{compactenumi}{enumerate}{5}%
\setlist[compactenumi]{topsep=0pt,itemsep=-1ex,partopsep=1ex,parsep=1ex,%
   label=(\roman*)}%

\newlist{compactitem}{itemize}{5}%
\setlist[compactitem]{topsep=0pt,itemsep=-1ex,partopsep=1ex,parsep=1ex,%
   label=\bullet}%

\DeclareFontFamily{U}{BOONDOX-calo}{\skewchar\font=45 }
\DeclareFontShape{U}{BOONDOX-calo}{m}{n}{
  <-> s*[1.05] BOONDOX-r-calo}{}
\DeclareFontShape{U}{BOONDOX-calo}{b}{n}{
  <-> s*[1.05] BOONDOX-b-calo}{}
\DeclareMathAlphabet{\mathcalb}{U}{BOONDOX-calo}{m}{n}
\SetMathAlphabet{\mathcalb}{bold}{U}{BOONDOX-calo}{b}{n}
\DeclareMathAlphabet{\mathbcalb}{U}{BOONDOX-calo}{b}{n}

\usepackage{stmaryrd}%
\providecommand{\IntRange}[1]{\mleft\llbracket #1 \mright\rrbracket}
\newcommand{\IRX}[1]{\IntRange{#1}}%

\providecommand{\Mh}[1]{#1}%

\newcommand{\Caratheodory}{Carath\'eodory\xspace}

\renewcommand{\P}{\Mh{P}}%
\newcommand{\Q}{\Mh{Q}}%

\newcommand{\CHX}[1]{\Mh{\mathsf{CH}}\pth{#1}}

\newcommand{\cenX}[1]{\Mh{\overline{\mathsf{c}}}_{#1}}
\newcommand{\centroX}[1]{\Mh{\overline{\mathsf{m}}}_{#1}}
\newcommand{\permut}[1]{\left\langle {#1} \right\rangle}
\newcommand{\DotProdY}[2]{\permut{{#1},{#2}}}

\newcommand{\normX}[1]{\left\| {#1} \right\|}%

\newcommand{\ball}{\Mh{\mathcalb{b}}}%
\newcommand{\ballY}[2]{\ball\pth{#1, #2}}%

\numberwithin{figure}{section}%
\numberwithin{table}{section}%
\numberwithin{equation}{section}%

\DeclareMathAlphabet{\mathpzc}{OT1}{pzc}{m}{it}

\newcommand{\ee}{\Mh{\mathcalb{e}}}

\newcommand{\Otild}{\Mh{\widetilde{O}}}

\newcommand{\SaveContent}[2]{%
   \expandafter\newcommand{#1}{#2}%
}

\newcommand{\RestatementOf}[2]{
   \noindent%
   \textbf{Restatement of #1.}
   {\em #2{}}%
}

\newcommand{\Event}{\Mh{\mathcal{E}}}%
\newcommand{\avgpX}[1]{\Mh{\nabla}\pth{#1}}%
\newcommand{\avgp}{\Mh{\nabla}}%
\newcommand{\Sample}{\Mh{R}}%

\providecommand{\TPDF}[2]{\texorpdfstring{#1}{#2}}

\providecommand{\Mh}[1]{#1}%

\newcommand{\cp}{\Mh{\mathsf{c}}}%
\newcommand{\cDelta}{c}
\newcommand{\cpNR}{\Mh{\mathsf{c}_{n,r}}}
\newcommand{\dY}[2]{\left\| #1 #2 \right\|}

\newcommand{\radiusX}[1]{\mathrm{radius}\pth{#1}}
\newcommand{\ts}{\hspace{0.6pt}}

\renewcommand{\P}{\Mh{P}}
\newcommand{\q}{\Mh{q}}

\newcommand{\p}{\Mh{p}}%

\newcommand{\epsA}{\Mh{\delta}}%

\newcommand{\espB}{\xi}

\newcommand{\diam}{\Mh{\Delta}}
\newcommand{\diamX}[1]{\mathrm{diam}\pth{#1}}

\newcommand{\TSet}{\Mh{\mathcal{H}}}

\BibLatexMode{%
   \bibliography{fat_separator} }

\begin{document}
\title{No-dimensional Tverberg Partitions Revisited%
   \footnote{A preliminary version of this paper appeared in SWAT
      2024. A full version is in submission to DCG.}%
}

\author{%
   Sariel Har-Peled%
   \SarielThanks{Work on this paper was partially supported by NSF AF
      award CCF-2317241.%
   }%
   \and
   Eliot W. Robson%
   \EliotThanks{Work on this paper was partially supported by NSF
      AF awards CCF-2317241.}%
}%
\date{\today}
\maketitle

\begin{abstract}
   Given a set $\P \subset \Re^d$ of $n$ points, with diameter
   $\diam$, and a parameter $\epsA \in (0,1)$, it is known that there
   is a partition of $\P$ into sets $\P_1, \ldots, \P_t$, each of size
   $O(1/\epsA^2)$, such that their convex hulls all intersect a common
   ball of radius $\epsA \diam$. We prove that a random partition,
   with a simple alteration step, yields the desired partition,
   resulting in a (randomized) linear time algorithm (i.e.,
   $O(dn)$). We also provide a deterministic algorithm with running
   time $O( dn \log n)$. Previous proofs were either existential
   (i.e., at least exponential time), or required much bigger sets.
   In addition, the algorithm and its proof of correctness are
   significantly simpler than previous work, and the constants are
   slightly better.

   We also include a number of applications and extensions using the
   same central ideas. For example, we provide a linear time algorithm
   for computing a ``fuzzy'' centerpoint, and prove a no-dimensional
   weak $\eps$-net theorem with an improved constant.
\end{abstract}

\section{Introduction}

\paragraph*{Centerpoints}

A point $\cp$ is an \emphw{$\alpha$-centerpoint} of a set
$\P \subseteq \Re^d$ of $n$ points, if all closed halfspaces
containing $\cp$ also contain at least $\alpha n$ points of $\P$.  The
parameter $\alpha$ is the \emphw{centrality} of $\cp$, while
$\alpha n$ is its \emphi{Tukey depth}. The centerpoint theorem \cite{m-ldg-02},
which is a consequence of Helly's theorem \cite{h-usamp-30}, states that a
$1/(d+1)$-centerpoint (denoted $\cenX{\P}$) always exists.

In two dimensions, Jadhav and Mukhopadhyay~\cite{jm-ccfpsplt-94}
presented an $O(n)$ time algorithm for computing a $1/3$-centerpoint
(but not the point of maximum Tukey depth). Chan \etal
\cite{chj-oagcd-22} presented an $O(n \log n+ n^{d-1})$ algorithm for
computing the point of maximum Tukey depth (and thus also a
$1/(d+1)$-centerpoint). It is believed that $\Omega(n^{d-1})$ is a
lower bound on solving this problem exactly, see \cite{chj-oagcd-22}
for details and history.

This guarantee of a $1/(d+1)$-centerpoint is tight, as demonstrated by
placing the points of $\P$ in $d+1$ small, equal size clusters
(mimicking weighted points) in the vicinity of the vertices of a
simplex. Furthermore, the lower-bound of $\ceil{n/ (d+1)}$ is all but
meaningless if $d$ is as large as $n-1$.

\paragraph*{Approximating centrality}
A randomized $\Otild(d^9)$ time algorithm for computing a (roughly)
$1/(4d^2)$ centerpoint was presented by Clarkson \etal
\cite{cemst-acpir-96}, and a later refinement by Har-Peled and Jones
\cite{hj-jcps-21} improved this algorithm to compute a (roughly)
$1/d^2$ centerpoint in $\Otild(d^7)$ time, where $\Otild$ hides
polylog terms. Miller and Sheehy \cite{ms-acp-10} derandomized the
algorithm of Clarkson \etal, computing a $\Omega(1/d^2)$ centerpoint
in time $n^{O(\log d)}$. This result was later applied by Mulzer and
Werner \cite{mw-atplt-13}, computing a $\Omega(1/d^3)$ centerpoint in
time $d^{O(\log d)} n$. Developing an algorithm that computes a
$1/(d+1)$-centerpoint in polynomial time (in $d$) in still open,
although the existence of such an algorithm with running time better
than $\Omega(n^{d-1})$ seems unlikely, as mentioned above.

\paragraph*{Tverberg partitions}
Consider a set $\P$ of $n$ points in $\Re^d$. Tverberg's theorem
states that such a set can be partitioned into $k = \floor{n/(d+1)}$
subsets, such that all of their convex-hulls intersect. Specifically,
a point in this common intersection is a $1/(d+1)$-centerpoint.
Indeed, a point $\p$ contained in the convex-hulls of the $k$ sets of
the partition is a $k/n$-centerpoint, as any halfspace containing $\p$
must also contain at least one point from each of these $k$ subsets.
Refer to the surveys \cite{dgmm-dyutc-19} and \cite{bs-ttfyo-18} for
information on this and related theorems.

This theorem has an algorithmic proof \cite{tv-grths-93}, but its
running time is $n^{O(d^2)}$.  To understand the challenge in getting
an efficient algorithm for this problem, observe that it is not known,
in strongly polynomial time, to decide if a point is inside the
convex-hull of a point set (i.e., is it a
$1/n$-centerpoint?). Similarly, for a given point $\p$, it is not
known how to compute, in weakly or strongly polynomial time, the
centrality of $\p$.  Nevertheless, a Tverberg partition is quite
attractive, as the partition itself (and its size) provides a compact
proof (i.e., lower bound) of its centrality. If the convex-combination
realization of $\p$ inside each of these sets is given, then its
$k/n$-centrality can be verified in linear time.

There has been significant work trying to compute Tverberg partitions
with as many sets as possible while keeping the running time
polynomial. The best polynomial algorithms currently known (roughly)
match the bounds for the approximate centerpoint mentioned
above. Specifically, it is known how to compute a Tverberg partition
of size $O\bigl(n/(d^2 \log d)\bigr)$ (along with a point in the
common intersection) in weakly polynomial time. See \cite{tv-grths-93,
   ms-acp-10, mw-atplt-13, rs-aatt-16, mmss-relpf-17,
   cm-ndtta-22,hz-iaatp-21} and references therein.

\paragraph*{No-dimensional Tverberg theorem}
Adiprasito \etal \cite{abmt-tchtw-20} proved a no-dimensional variant
of Tverberg's theorem. Specifically, for $\epsA \in (0,1)$, they
showed that one can partition a point set $\P$ into sets of size
$O(1/\epsA^2)$, such that the convex-hulls of the sets intersect a
common ball of radius $\epsA\ts \diamX{\P}$. Their result is
existential and does not yield an efficient algorithm. However, as the
name suggests, it has the attractive feature that the sets in the
partition have size that does not depend on the dimension.

Choudhary and Mulzer \cite{cm-ndtta-22} gave a weaker version of this
theorem, but with an efficient algorithm. Informally, given a set
$\P \subset \Re^d$ of $n$ points, and a parameter $\epsA \in (0,1)$,
$\P$ can be partitioned, in $O( n d \log k)$ time, into
$k = O(\epsA \sqrt{n})$ sets $\P_1, \ldots, \P_k$, each of size
$\Theta(\sqrt{n}/\epsA)$, such that there is a ball of radius
$\epsA\ts \diamX{\P}$ that intersects the convex-hull of $\P_i$ for
every $i$. Note that the later (algorithmic) result is significantly
weaker than the previous (existential) result, as the subsets have to
be substantially larger.

Thus, the question remains: Can one compute a no-dimensional Tverberg
partition with the parameters of Adiprasito \etal \cite{abmt-tchtw-20}
in linear time?

\paragraph*{Centerball via Tverberg partition}

As observed by Adiprasito \etal \cite{abmt-tchtw-20}, a no-dimensional
Tverberg partition readily implies a no-dimensional centerpoint
result, where the central point is replaced by a ball. Specifically,
they showed that one can compute a ball of radius
$\epsA \ts \diamX{\P}$ such that any halfspace containing it contains
$\Omega( \epsA^2 n)$ points of $\P$.

\paragraph*{Centroid and sampling}

The \emphi{centroid} of a point set $\P$ is the point
$\centroX{\P} = \sum_{\p \in \P} \p/\cardin{\P}$. The \emphi{$1$-mean}
price of clustering $\P$, using $\q$, is the sum of squared distances
of the points of $\P$ to $\q$, that is
$f(\q) = \sum_{\p \in \P} \dY{\p}{\q}^2$.  It is not hard to verify
that $f$ is minimized at the centroid $\centroX{\P}$.  A classical
observation of Inaba \etal \cite{iki-awvdr-94} is that a sample
$\Sample$ of size $O(1/\epsA^2)$ of points from $\P$ is $\epsA$-close
to the global centroid of the point set. That is,
$\dY{\centroX{\P}}{\centroX{\Sample}} \leq \epsA\ts \diamX{\P}$ with
constant probability.  Applications of this observation to $k$-means
clustering and sparse coresets are well known, see Clarkson
\cite[Section~2.4]{c-csgafwa-08} and references therein.

\paragraph*{Our results}

We show that the aforementioned observation of Inaba \etal implies the
no-dimensional Tverberg partition. Informally, for a random partition
of $\P$ into sets of size $O(1/\epsA^2)$, most of the sets are in
distance at most $\epsA \ts \diamX{\P}$ from the global centroid of
$\P$. By folding the far sets (i.e., ``bad''), into the close sets
(i.e., ``good''), we obtain the desired partition. The resulting
algorithm has (expected) linear running time $O(d n)$.

For the sake of completeness, we prove the specific form of the
$1$-mean sampling observation \cite{iki-awvdr-94} we need in
\lemref{sample:mean} -- the proof requires slightly tedious but
straightforward calculations. The linear time algorithm for computing
the no-dimensional Tverberg partition is presented in \thmref{main:a},
which is the main result of this paper.

In the other extreme, one wants to split the point set into two sets
of equal size while minimizing their distance.  We show that a set
$\P$ with $2n$ points can be split (in linear time) into two sets of
size $n$, such that (informally) the expected distance of their
centroids is $\leq \diamX{\P}/\sqrt{n}$. The proof of this is even
simpler (!), and the bound is tight; see \lemref{sample:mean:2}.  We
present several applications: %
\medskip%
\begin{compactenumI}
    \item \textsc{No-dimensional Centerball.}  In
    \secref{center:ball}, we present a no-dimensional generalization
    of the centerpoint theorem. As mentioned above, this was already
    observed by Adiprasito \etal \cite{abmt-tchtw-20}, but our version
    can be computed efficiently.

    \smallskip%
    \item \textsc{Weak $\eps$-net.} A new proof of the no-dimensional
    version of the weak $\eps$-net theorem with improved constants,
    see \secref{weak}.

    \smallskip%
    \item \textsc{Derandomization.}  The sampling mean lemma (i.e.,
    \lemref{sample:mean}) can be derandomized to yield a linear time
    algorithm, see \lemref{s:m:d:fast}. The somewhat slower version,
    \lemref{s:m:d:slow}, is a nice example of using conditional
    expectations for derandomization.  Similarly, the halving scheme
    of \lemref{sample:mean:2} can be derandomized in a fashion similar
    to discrepancy algorithms \cite{m-gd-99, c-dmrc-01}. The
    derandomized algorithm, presented in \lemref{sample:mean:3}, has
    linear running time $O(dn)$.

    This leads to a deterministic $O(dn \log n)$ time algorithm for
    the no-dimensional Tverberg partition, see \lemref{main:b}. The
    idea is to repeatedly apply the halving scheme, in a binary tree
    fashion, till the point set is partitioned into subsets of size
    $O(1/\epsA^2)$. Both the running time and constants are somewhat
    worse than the randomized algorithm of \thmref{main:a}, but it is
    conceptually even simpler, avoiding the need for an alteration
    step.
\end{compactenumI}
\medskip%
As an extra, another neat implication of the observation of Inaba
\etal \cite{iki-awvdr-94} is the dimension free version of
\Caratheodory's theorem \cite{m-ldg-02}, which we present for
completeness in \apndref{caratheodory}.

\paragraph*{Simplicity}

While simplicity is in the eyes of the beholder, the authors find the
brevity of the results here striking compared to previous work. In
particular, our presentation here is longer than strictly necessary,
as we reproduce proofs of previous known results, such as
\lemref{sample:mean} and its variant \lemref{sample:mean:2}, so our
work is self contained.

\section{Approximate Tverberg partition via mean sampling}

In the following, for two points $\p,\q \in \Re^d$, let
$\p\q = \DotProdY{\p}{\q} = \sum_{i=1}^d \p[i]\q[i]$ denote their
dot-product. Thus, $\p^2 = \DotProdY{\p}{\p} = \normX{\p}^2$.  Let
$\P$ be a finite set of points in $\Re^d$ (but any metric space equipped with a
dot-product suffices), and let
$\centroX{\P} = \sum_{\p \in \P} \p/\cardin{\P}$ denote the
\emphi{centroid} of $\P$. The \emphi{average price} of the $1$-mean
clustering of $\P$ is
\begin{equation}
    \avgpX{\P}
    =
    \sqrt{\sum\nolimits_{\p \in \P} \dY{\p}{\centroX{\P}}^2/\cardin{\P}}
    \leq
    \diamX{\P}.
    \eqlab{yo}
\end{equation}
The last inequality follows as $\centroX{\P} \in \CHX{\P}$, and for
any $\p \in \P$, we have $\dY{\p }{\centroX{\P}} \leq
\diamX{\P}$. This inequality can be tightened.

\begin{lemma}
    \lemlab{side:show}%
    We have $\avgpX{\P}\leq \diamX{\P}/\sqrt{2}$, and there is a point
    set $\Q$ in $\Re^d$, such that
    \begin{equation*}
        \avgpX{\Q}\geq \sqrt{1-\tfrac{1}{d}} \tfrac{1}{\sqrt{2}}
        \diamX{\Q}
    \end{equation*}
    (i.e., the inequality is essentially tight).
\end{lemma}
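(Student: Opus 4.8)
The plan is to derive both statements from the classical identity that expresses the $1$-mean cost in terms of pairwise distances. First I would establish that, for any finite $\P\subset\Re^d$ with centroid $m=\centroX{\P}$,
\[
\sum\nolimits_{\p\in\P}\normX{\p-m}^2 \;=\; \frac{1}{2\cardin{\P}}\sum\nolimits_{\p,\q\in\P}\normX{\p-\q}^2 ,
\]
where the sum on the right runs over all ordered pairs (the diagonal terms contributing $0$). This is a one-line computation: expanding the squared norms via the dot product and using $\sum_{\p\in\P}\p=\cardin{\P}\,m$, both sides collapse to $\sum_{\p\in\P}\normX{\p}^2-\cardin{\P}\normX{m}^2$. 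Since $\normX{\p-\q}\le\diamX{\P}$ and there are $\cardin{\P}(\cardin{\P}-1)$ ordered pairs with $\p\ne\q$, the identity gives $\sum_{\p\in\P}\normX{\p-m}^2\le\tfrac{\cardin{\P}-1}{2}\diamX{\P}^2$, and hence
\[
\avgpX{\P}=\sqrt{\tfrac1{\cardin{\P}}\sum\nolimits_{\p\in\P}\normX{\p-m}^2}\;\le\;\sqrt{\tfrac{\cardin{\P}-1}{2\cardin{\P}}}\;\diamX{\P}\;\le\;\frac{\diamX{\P}}{\sqrt2}.
\]

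For the tightness example I would take $\Q=\{e_1,\dots,e_d\}$, the standard basis of $\Re^d$, so that $\cardin{\Q}=d$, every $\normX{e_i-e_j}=\sqrt2$ for $i\ne j$, and $\diamX{\Q}=\sqrt2$. Feeding this into the identity gives $\sum_{\p\in\Q}\normX{\p-\centroX{\Q}}^2=\tfrac1{2d}\cdot 2d(d-1)=d-1$, so $\avgpX{\Q}^2=1-\tfrac1d=\bigl(1-\tfrac1d\bigr)\tfrac12\,\diamX{\Q}^2$; that is, the bound holds with equality. (One can also see this directly: $\centroX{\Q}=\tfrac1d(1,\dots,1)$, so $\normX{e_i-\centroX{\Q}}^2=(1-\tfrac1d)^2+(d-1)\tfrac1{d^2}=1-\tfrac1d$ for each $i$.)

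I do not anticipate a real obstacle here; the only care needed is the bookkeeping of the diagonal (zero) terms in the identity, and the observation that $\Q$ genuinely sits in $\Re^d$ (it spans a $(d-1)$-flat). I would also point out that the computation in fact yields the slightly sharper inequality $\avgpX{\P}\le\sqrt{(\cardin{\P}-1)/(2\cardin{\P})}\,\diamX{\P}$, which the basis example attains exactly for every cardinality, the value in the statement being the special case $\cardin{\P}=d$.
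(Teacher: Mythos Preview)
Your proof is correct and follows essentially the same route as the paper: both derive the pairwise-distance identity $\sum_{\p}\normX{\p-\centroX{\P}}^2=\tfrac{1}{2\cardin{\P}}\sum_{\p,\q}\normX{\p-\q}^2$ (the paper via translating the centroid to the origin and expanding, you by a direct expansion), bound each pairwise term by $\diamX{\P}^2$, and use the standard-basis simplex for tightness. Your bookkeeping of the diagonal terms even yields the slightly sharper constant $\sqrt{(\cardin{\P}-1)/(2\cardin{\P})}$, which the paper does not state explicitly.
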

\begin{proof}
    This claim only improves the constant in our main result, and the
    reader can safely skip reading the proof.  Let $\P$ be a set of
    $n$ points in $\Re^d$, with $\diam = \diamX{\P}$ and
    $\avgp = \avgpX{\P}$.  Assume that $\centroX{\P} = 0$, as the
    claim is translation invariant. That is $\sum_{\q \in \P} \q =0$,
    and
    \begin{equation*}
        \beta =
        \sum_{\p,\q \in \P}
        \DotProdY{\p}{ \q}
        =%
        \sum_{\p \in \P}
        \DotProdY{\Bigl.\p}{\smash{\sum\nolimits_{\q \in \P}} \q}
        =%
        \sum_{\p \in \P}
        \DotProdY{\p}{0}
        =%
        0.
    \end{equation*}
    We have
    \begin{align*}
      n \avgp^2
      &=
        \sum_{\p \in \P} \normX{\p }^2
        =%
        \sum_{\p , \q \in \P} \frac{\normX{\p }^2+\normX{\q}^2}{2n}
        =%
        \sum_{\p , \q \in \P} \frac{\normX{\p }^2 -
        2\DotProdY{\p}{\q} + \normX{\q}^2}{2n}
        + \frac{2\beta}{2n}
        =%
        \sum_{\p , \q \in \P} \frac{\dY{\p }{\q}^2}{2n}
      \\&%
      \leq
      \sum_{\p \in \P, \q \in \P} \frac{\diam^2}{2n}
      =
      \frac{n^2 \diam^2}{2n}.
    \end{align*}
    Implying that $\avgp^2 \leq \diam^2/2$.

    As for the lower bound, let $\ee_i$ be the $i$\th standard unit
    vector\footnote{That is, $\ee_i$ is $0$ in all coordinates except
       the $i$\th coordinate where it is $1$.}  in $\Re^d$, and
    consider the point set $\Q = \{\ee_1, \ldots, \ee_d\}$. We have
    that $\diamX{\Q} = \sqrt{2}$ and
    $\centroX{\Q} = (1/d,\ldots, 1/d)$. Consequently,
    \begin{align*}
      \avgpX{\Q}
      &=
        \sqrt{\tfrac{1}{\cardin{\Q}}\sum\nolimits_{\q \in \Q} \dY{\q
        }{\centroX{\Q}}^2}
        =
        \sqrt{ \tfrac{\cardin{\Q}}{\cardin{\Q}}
        \pth{ (1-1/d)^2 + (d-1)/d^2}}
      =%
      \sqrt{   \frac{(d-1)^2 + d-1}{d^2}}
        \\&
      =
      \sqrt{\frac{d-1}{d}}
      =
      \frac{\diamX{\Q}}{\sqrt{2}}
      \sqrt{1 - \frac{1}{d}}%
      \geq%
      \pth{1-\frac{1}{d}}\frac{1}{\sqrt{2}}\diamX{\Q}.
    \end{align*}
\end{proof}

\begin{definition}
    A subset $X \subseteq \P$ is \emphi{$\epsA$-close} if the centroid
    of $X$ is in distance at most $\epsA \ts \diamX{\P}$ from the
    centroid of $\P$ -- that is,
    $\dY{\centroX{X}}{\centroX{\P}} \leq \epsA \ts \diamX{\P}$.
\end{definition}

\subsection{Proximity of centroid of a sample}

The following is by now standard -- a random sample of $O(1/\epsA^2)$
points from $\P$ is $\epsA$-close with good probability, see Inaba
\etal \cite[Lemma 1]{iki-awvdr-94}.  We include the proof for the sake
of completeness, as we require this somewhat specific form.

\begin{lemma}
    \lemlab{sample:mean}%
    Let $\P$ be a set of $n$ points in $\Re^d$, and $\epsA \in (0,1)$
    be a parameter. Let $R \subseteq \P$ be a random sample of size
    $r$ picked uniformly without replacement from $\P$, where
    $r \geq \zeta /\epsA^2$ and $\zeta > 1$ is a parameter.  Then, we
    have
    $\Prob{\dY{\centroX{\P}}{\centroX{R}} > \epsA \avgpX{\P}} <
    1/\zeta$.
\end{lemma}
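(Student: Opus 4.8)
The plan is to control the second moment $\Ex{\dY{\centroX{R}}{\centroX{\P}}^2}$ and then invoke Markov's inequality on the nonnegative random variable $\dY{\centroX{R}}{\centroX{\P}}^2$. By translation invariance we may assume $\centroX{\P} = 0$, so that $\avgpX{\P}^2 = \tfrac1n\sum_{\p\in\P}\normX{\p}^2$ and, crucially, $\sum_{\p,\q\in\P}\DotProdY{\p}{\q} = \normX{\sum_{\p\in\P}\p}^2 = 0$ (exactly the vanishing-sum trick used in the proof of \lemref{side:show}).

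Write the (order-forgetting) sample as $R = \{X_1,\ldots,X_r\}$, where $(X_1,\ldots,X_r)$ is a uniformly random ordered sequence of distinct points of $\P$, so $\centroX{R} = \tfrac1r\sum_{i=1}^r X_i$. Expanding the square of the norm and using linearity of expectation,
\begin{equation*}
    \Ex{\normX{\centroX{R}}^2}
    =
    \frac{1}{r^2}\pth{\sum_{i=1}^r \Ex{\normX{X_i}^2}
       + \sum_{i \neq j} \Ex{\DotProdY{X_i}{X_j}}}
    =
    \frac{\avgpX{\P}^2}{r}\cdot\frac{n-r}{n-1}
    \leq
    \frac{\avgpX{\P}^2}{r},
\end{equation*}
where for the middle equality each $X_i$ is uniform over $\P$ so $\Ex{\normX{X_i}^2} = \avgpX{\P}^2$, while for $i\neq j$ the pair $(X_i,X_j)$ is uniform over ordered pairs of distinct points, hence $\Ex{\DotProdY{X_i}{X_j}} = \tfrac{1}{n(n-1)}\pth{\sum_{\p,\q\in\P}\DotProdY{\p}{\q} - \sum_{\p\in\P}\normX{\p}^2} = -\tfrac{\avgpX{\P}^2}{n-1}$; substituting and simplifying gives the stated fraction.

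Markov's inequality now yields $\Prob{\normX{\centroX{R}}^2 > \epsA^2\avgpX{\P}^2} \leq \tfrac{\Ex{\normX{\centroX{R}}^2}}{\epsA^2\avgpX{\P}^2} \leq \tfrac{1}{r\epsA^2} \leq \tfrac{1}{\zeta}$ (the degenerate case $\avgpX{\P}=0$ being trivial), and taking square roots gives the lemma. To get the strict inequality: $r \geq \zeta/\epsA^2 > 1$ forces the integer $r \geq 2$, so when $r < n$ the factor $\tfrac{n-r}{n-1}$ is strictly below $1$ and the second estimate above is strict, while the case $r = n$ gives $\centroX{R} = \centroX{\P}$ and probability $0$. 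There is essentially no obstacle beyond the bookkeeping of the without-replacement second moment; the only step that needs care is the negative-correlation term $\Ex{\DotProdY{X_i}{X_j}} = -\avgpX{\P}^2/(n-1)$, which is what produces the factor $(n-r)/(n-1)$ and keeps the bound as sharp as in the with-replacement setting.
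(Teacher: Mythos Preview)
Your proof is correct and follows essentially the same approach as the paper: translate so that $\centroX{\P}=0$, bound the second moment $\Ex{\normX{\centroX{R}}^2}$, and apply Markov's inequality. The only difference is cosmetic---you parameterize the sample by the sampled points $X_i$ rather than by indicator variables $I_i$---and your exact evaluation $\Ex{\normX{\centroX{R}}^2}=\frac{\avgpX{\P}^2}{r}\cdot\frac{n-r}{n-1}$ is slightly sharper than the paper's bound, which incidentally lets you justify the strict inequality $<1/\zeta$ that the paper's own proof leaves as $\leq 1/\zeta$.
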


\begin{proof}
    Let $\P=\{\p_1,\ldots, \p_n\}$. For simplicity of exposition,
    assume that $\centroX{\P} = \sum_{i=1}^n \tfrac{1}{n} \p_i=0$, as
    the claim is translation invariant. For $\avgp = \avgpX{\P}$,
    we have $\avgp^2 = \sum_{i=1}^n \tfrac{1}{n}\p_i^2$.  Let
    $\Bigl.Y = \sum_{\p \in R} \p = \sum_{i=1}^n I_i \p_i$, where
    $I_i$ is an indicator variable for $\p_i$ being in $R$. By
    linearity of expectations, we have
    \begin{equation*}
    	\Ex{Y}%
        =%
        \sum_{i=1}^n \Ex{I_i} \p_i%
        =%
        \sum_{i=1}^n \frac{r}{n} \p_i%
        =%
        r \sum_{i=1}^n \frac{1}{n} \p_i%
        =%
        r \, \centroX{\P}
        =
        0.
    \end{equation*}
    Observe that, for $i \neq j$, we define
    \begin{align}
      \cpNR =
      \Ex{I_i {}I_j}%
      =%
      \Prob{I_i= 1 \text{ and }I_j = 1}%
      =%
      \frac{{\textstyle\binom{n-2}{r-2}}}{\textstyle{\binom{n}{r}}}%
      \eqlab{pair:ex}
      =%
      \frac{r(r-1)}{n(n-1)}.
    \end{align}
    which notably does not depend on $i$ or $j$.  By the above, and
    since $\Ex{\smash{I_i^2}\bigr.}=\Ex{I_i}$, we have
    \begin{align}
      \nonumber%
      \Ex{\bigl.\smash{\normX{Y }^2}}
      &=
        \Ex{\bigl. \DotProdY{Y}{Y}}%
        =%
        \Ex{\Bigl.\smash{\bigl(\sum_{i=1}^n I_i p_i \bigr)^2}}
        =%
        \Bigl.\smash{\sum_{i=1}^n \Ex{I_i} \p_i^2 +
        2 \sum_{i<j} \Ex{I_i {}I_j} \p_i\p_j
        }
      \\[0.3cm]
      &
        =%
        \sum_{i=1}^n \frac{r}{n} \p_i^2 +
        2 \cpNR \sum_{i<j} \p_i\p_j%
        \leq%
        r \avgp^2
        + n^2 \cpNR
        \Bigl(\sum_{i=1}^n \frac{1}{n} \p_i \Bigr)^{\!2}%
        =%
        r \avgp^2,
        \eqlab{y:squared}
    \end{align}
    using the shorthand $\p_i \p_j = \DotProdY{\p_i}{\p_j}$ and
    $\p_i^2 = \DotProdY{\p_i}{\p_i}$.  As (i) $r = \cardin{R}$, (ii)
    $\centroX{R} = Y/\cardin{R} = Y/r$, (iii) $r \geq \zeta/\epsA^2$, and
    (iv) by Markov's inequality, we have
    \begin{align*}
        \Prob{\normX{\centroX{R}} > \epsA \avgp}
      &=%
        \Prob{\frac{\normX{Y}}{r} > \epsA \avgp}
        =%
        \Prob{\normX{Y}^2 > (r \epsA \avgp)^2}
        \leq%
        \frac{\Ex{\smash{\normX{Y}^2}} \bigr.}{(r \epsA \avgp)^2}
        \leq
        \frac{r \avgp^2}{(r \epsA \avgp)^2}
        =%
        \frac{1}{r \epsA^2 }
        \leq
        \frac{1}{\zeta}.
    \end{align*}
\end{proof}

\lemref{sample:mean} readily implies the no-dimensional \Caratheodory
theorem, see \apndref{caratheodory}.

\subsection{Approximate Tverberg theorem}

We now present the key technical lemma that will allow us to prove an
approximate Tverberg theorem.

\begin{lemma}
    \lemlab{main:a}%
    Let $\P$ be a set of $n$ points in $\Re^d$, and $\epsA \in (0,1)$
    be a parameter, and assume that $n \gg 1/\epsA^4$ (i.e.,
    $n \geq \cDelta/\epsA^4$, where $\cDelta$ is some absolute
    constant).  Let $\avgp = \avgpX{\P}$.  Then, one can compute, in
    $O(n d /\epsA^2)$ expected time, a partition of $\P$ into $k$ sets
    $\P_1, \ldots, \P_k$, and a ball $\ball$, such that
    \begin{compactenumi}
        \smallskip%
        \item $\forall i \ \cardin{\P_i} \leq 4/\epsA^2 + 9/2$,
        \smallskip%
        \item $\forall i \ \CHX{\P_i} \cap \ball \neq \emptyset$,
        \smallskip%
        \item $\radiusX{\ball} \leq \epsA \avgp$, and

        \smallskip%
        \item $ k \geq n/(4/\epsA^2+9/2)$.
    \end{compactenumi}
\end{lemma}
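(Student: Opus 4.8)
The plan is to combine the mean-sampling lemma (\lemref{sample:mean}) with a random partition and an alteration step. First I would fix $r = \ceil{4/\epsA^2}$ and let $m = \floor{n/r}$; partition $\P$ uniformly at random into groups $G_1, \ldots, G_m$ of size $r$ (discarding at most $r-1$ leftover points, or attaching them to $G_1$ — this only perturbs sizes by an additive $O(1/\epsA^2)$). Call a group \emph{good} if its centroid lies within $\epsA\avgp/\sqrt{2}$, say, of $\centroX{\P}$, and \emph{bad} otherwise. Applying \lemref{sample:mean} with $\zeta$ a constant strictly greater than $1$ (chosen so $r \geq \zeta/(\epsA/\sqrt 2)^2 = 2\zeta/\epsA^2$, which holds for $\zeta \le 2$, giving failure probability $< 1/\zeta \le 1/2$), each individual group is bad with probability at most $1/\zeta$; so the expected number of bad groups is at most $m/\zeta$, and by Markov at most $2m/\zeta < m$ with probability $\geq 1/2$. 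I would actually want the bad fraction bounded by a constant strictly below $1$, so that the good groups outnumber them by a constant factor — pushing $\zeta$ down (equivalently $r$ up by a constant factor) makes the bad fraction as small as we like.

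Next comes the alteration step. Each bad group gets absorbed (``folded'') into a distinct good group; since the number of good groups exceeds the number of bad groups, we can do this so each good group receives at most one bad group, at most doubling its size to $\le 2r \le 8/\epsA^2$ (and the leftover slack accounts for the $+9/2$). The resulting merged sets $\P_1, \ldots, \P_k$ have $k$ equal to the number of good groups, which is $\geq (1 - 2/\zeta) m \geq n/(4/\epsA^2 + 9/2)$ for a suitable constant choice; the hypothesis $n \gg 1/\epsA^4$ guarantees $m$ is large enough that the discarded-leftover and rounding losses are absorbed into the constants. For the ball, take $\ball = \ballY{\centroX{\P}}{\epsA\avgp}$: every good group's centroid lies in $\ball$ (and a fortiori inside $\CHX{G_i} \subseteq \CHX{\P_i}$, since adding points only enlarges the convex hull), so $\CHX{\P_i} \cap \ball \ne \emptyset$ for every $i$, and $\radiusX{\ball} = \epsA\avgp$. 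To get the claimed expected running time rather than merely ``with probability $\geq 1/2$'', I would repeat the random partition until fewer than a constant fraction of groups are bad; the expected number of trials is $O(1)$, each trial costs $O(nd)$ for centroid computations plus $O(nd/\epsA^2)$ is an overestimate but certainly suffices (computing $m$ centroids of $r$ points each in $\Re^d$ is $O(mrd) = O(nd)$; the alteration and bookkeeping is $O(n)$).

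The main obstacle is bookkeeping the constants so that the stated bounds $\cardin{\P_i} \le 4/\epsA^2 + 9/2$ and $k \ge n/(4/\epsA^2+9/2)$ come out exactly — in particular, reconciling the "fold bad into good" doubling (which naively gives $8/\epsA^2$) with the claimed $4/\epsA^2$. This suggests the intended argument is subtler: rather than merging whole bad groups into good ones, one should start from smaller groups — say of size $\approx 2/\epsA^2$ — apply \lemref{sample:mean} with a larger $\zeta$ so that only a small (say $\le 1/2$) fraction are bad, then fold each bad group into a good one so that good groups at most double to $\approx 4/\epsA^2$, while the count of surviving sets is still $\Theta(n\epsA^2)$ with the right leading constant. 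Getting $\zeta$, the initial group size, and the fraction of bad groups to line up so that both (i) and (iv) hold simultaneously with these precise constants is the delicate part; everything else is a direct application of \lemref{sample:mean}, the monotonicity $\CHX{G_i} \subseteq \CHX{\P_i}$, and Markov's inequality for the expected-time claim.
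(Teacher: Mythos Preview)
Your approach is essentially the same as the paper's: random partition into blocks of size $\approx 2/\epsA^2$, apply \lemref{sample:mean} to each block, use Markov's inequality to bound the number of bad blocks, fold each bad block into a distinct good one, and take $\ball = \ballY{\centroX{\P}}{\epsA\avgp}$. Your observation that one must start from blocks of size $\approx 2/\epsA^2$ (not $4/\epsA^2$) so that doubling lands at $4/\epsA^2$ is exactly what the paper does.

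There is one concrete gap in your bookkeeping that you should fix, because it is precisely the source of the $O(nd/\epsA^2)$ running time in the statement. You assert that ``the expected number of trials is $O(1)$''. This is incompatible with the target constant $4/\epsA^2 + 9/2$. To get blocks of size $M \approx 2/\epsA^2$, you need $\zeta \approx 2$ in \lemref{sample:mean}; the paper takes $\zeta = 2(1+\epsA^2/8)$, so $M = \ceil{\zeta/\epsA^2}$ and $2(M+1) \le 4/\epsA^2 + 9/2$. But with $\zeta$ this close to $2$, each block is bad with probability up to $1/\zeta \approx 1/2$, so $\Ex{Z} \approx t/2$, and Markov gives
\[
\Prob{Z \ge t/2} \le \frac{\Ex{Z}}{t/2} \le \frac{1}{1+\epsA^2/8} \le 1 - \frac{\epsA^2}{16},
\]
i.e.\ success probability only $\Omega(\epsA^2)$ per round, hence $O(1/\epsA^2)$ expected rounds and the stated $O(nd/\epsA^2)$ total. (Also, your phrase ``pushing $\zeta$ down \ldots\ makes the bad fraction as small as we like'' is backwards: larger $\zeta$ shrinks the bad fraction but inflates the block size.) If you are willing to relax the size bound to $\approx 6/\epsA^2$, you can take $\zeta$ a constant strictly above $2$ and then a round succeeds with constant probability --- the paper records this as a separate corollary with $O(nd)$ expected time.
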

\begin{proof}
    Let $\ball = \ballY{\centroX{\P}}{\epsA \avgp}$.  Let
    $\zeta = 2(1+\epsA^2/8)$, and $M = \ceil{\smash{\zeta/\epsA^2}}$.
    We randomly partition the points of $\P$ into
    $t = \floor{n/M} > M$ sets $\Q_1,\ldots, \Q_t$, all of size either
    $M$ or $M+1$ (this can be done by randomly permuting the points of
    $\P$, and allocating each set a range of elements in this
    permutation). Thus, each $\Q_i$, for
    $i\in \IRX{t} = \{1,\ldots,t\}$, is a random sample according to
    \lemref{sample:mean} with parameter $\geq \zeta$. Thus, with
    probability $\geq 1-1/\zeta$, the set $\Q_i$, for $i \in \IRX{t}$,
    is $\epsA$-close -- that is,
    $\dY{\centroX{\Q_i}}{\centroX{\P}} \leq \epsA \avgp$, and $\Q_i$
    is then considered to be \emphw{good}.

    Let $Z$ be the number of bad sets in $\Q_1, \ldots,\Q_{t}$. The
    probability of a set to be bad is at most $1/\zeta$, and by
    linearity of expectations, $\Ex{Z} \leq t/\zeta$.  Let
    $\beta = t(1+\epsA^2/8)/\zeta = t/2$.  By Markov's inequality, we
    have
    \begin{equation}
        \Prob{Z \geq t/2}
        =%
        \Prob{Z \geq \beta}
        \leq%
        \frac{ \Ex{Z}}{\beta}
        \leq
        \frac{ t/\zeta}{ (1+\epsA^2/8)t /\zeta}
        =%
        \frac{1}{1+\epsA^2/8}%
        \leq%
        1-\frac{\epsA^2}{16}.
        \eqlab{key}%
    \end{equation}
    We consider a round of sampling \emph{successful} if
    $Z < \beta=t/2$. The algorithm can perform the random partition
    and compute the centroid for all $\P_i$ in $O(n d)$ time overall.
    Since a round is successful with probability $\geq \epsA^2/16$,
    after $\ceil{\smash{16/\epsA^2}}$ rounds, the algorithm succeeds
    with constant probability. This implies that the algorithm
    performs, in expectation, $O(1/\epsA^2)$ rounds till being
    successful, and the overall running time is $O( nd/\epsA^2)$ time
    in expectation.

    In the (first and final) successful round, the number of bad sets
    is $<t/2$ -- namely, it is strictly smaller than the number of
    good sets. Therefore, we can match each bad set $B$ in the
    partition to a unique good set $G$, and replace both of them by a
    new set $X = G \cup B$. That is, every good set absorbs at most
    one bad set, forming a new partition with roughly half the
    sets. For such a newly formed set $X$, we have that
    \begin{align*}
      \cardin{X}
      &=%
        \cardin{B} + \cardin{G}
        \leq%
        2(M+1)
        \leq
        2{\ceil{\smash{\zeta/\epsA^2}}} + 2
        =%
        2{\ceil{\frac{2(1+\epsA^2/8)}{\epsA^2}}}+2
      =%
      2\ceil{ \frac{2}{\epsA^2} + \frac{1}{4}}
      +2
      \leq%
      2\pth{\frac{2}{\epsA^2}+ \frac{5}{4}} + 2
      \leq%
      \frac{4}{\epsA^2} + \frac{9}{2}.
    \end{align*}
    The point $\centroX{G}$ is in $\CHX{G} \subset \CHX{X}$, and
    $\centroX{G}$ is in distance at most $\epsA \avgp$ from the
    centroid of $\P$. Thus, all the newly formed sets in the partition
    are in distance $\leq \epsA \avgp$ from $\centroX{\P}$, and
    $\CHX{X} \cap \ball \neq \varnothing$.

    Finally, we have that the number of sets in the merged partition
    is at least
    \begin{math}
        k \geq \frac{n}{4/\epsA^{2}+9/2}.
    \end{math}
\end{proof}

\begin{remark}
    The mysterious constant $c$ (from the inequality $n \geq c/\delta^4$
    of \lemref{main:a}) is used in the partition implicitly -- the
    number of sets in the partition needs to be even. Thus, one set
    might need to be absorbed in the other sets, or more precisely two
    sets, because of the rounding issues.  Namely, we first partition
    the set into groups of size $M$, and we need at least $2M+2$ sets
    in the partition to have size $M$ (one additional last set can
    have size smaller than $M$).  Thus, the proof requires that
    $n \geq (2M+2)M + M = (2M+3)M $. This is satisfied, for example,
    if $n \geq 27/\delta^4$.
\end{remark}

\begin{theorem}
    \thmlab{main:a}%
    Let $\P$ be a set of $n$ points in $\Re^d$, and
    $\epsA \in (0,1/\sqrt{2})$ be a parameter, and assume that
    $n \gg 1/\epsA^4$.  Then, one can compute, in $O(n d /\epsA^2)$
    expected time, a partition of $\P$ into sets $\P_1, \ldots, \P_k$,
    and a ball $\ball$, such that
    \begin{compactenumi}
        \smallskip%
        \item $\forall i \ \cardin{\P_i} \leq 4/\epsA^2 + 9/2$,
        \smallskip%
        \item $\forall i \ \CHX{\P_i} \cap \ball \neq \emptyset$,
        \smallskip%
        \item $\radiusX{\ball} \leq \epsA \ts \diamX{\P}$, and

        \smallskip%
        \item $ k \geq n/(4/\epsA^2 + 9/2)$.
    \end{compactenumi}
\end{theorem}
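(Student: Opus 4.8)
The plan is to derive \thmref{main:a} as an immediate consequence of \lemref{main:a} together with the sharpened bound $\avgpX{\P} \le \diamX{\P}/\sqrt{2}$ supplied by \lemref{side:show}; no new ideas are needed, only a substitution in the radius bound. First I would check that the hypotheses of \thmref{main:a} imply those of \lemref{main:a}: indeed $\epsA \in (0,1/\sqrt{2}) \subset (0,1)$, and the assumption $n \gg 1/\epsA^4$ is common to both statements.

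So I would run the algorithm of \lemref{main:a} with parameter $\epsA$, obtaining in $O(nd/\epsA^2)$ expected time a partition $\P_1,\ldots,\P_k$ of $\P$ and a ball $\ball$ satisfying: (i) $\cardin{\P_i} \le 4/\epsA^2 + 9/2$ for all $i$; (ii) $\CHX{\P_i} \cap \ball \neq \emptyset$ for all $i$; (iii) $\radiusX{\ball} \le \epsA\ts \avgpX{\P}$; and (iv) $k \ge n/(4/\epsA^2 + 9/2)$. Conclusions (i), (ii), and (iv) are verbatim the first, second, and fourth conclusions of the theorem. For the third, I would invoke \lemref{side:show}, which gives $\avgpX{\P} \le \diamX{\P}/\sqrt{2} \le \diamX{\P}$; chaining this with (iii) yields $\radiusX{\ball} \le \epsA\ts \avgpX{\P} \le \epsA\ts \diamX{\P}$, which is the remaining conclusion. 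The running time and all constants are inherited unchanged from \lemref{main:a}.

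For completeness I would also note where the restriction $\epsA < 1/\sqrt{2}$ genuinely matters: it allows the slightly stronger route of running \lemref{main:a} with the rescaled parameter $\epsA' = \sqrt{2}\ts\epsA$, which lies in $(0,1)$ precisely because $\epsA < 1/\sqrt{2}$ (and $n \gg 1/\epsA'^4$ still holds since $\epsA' = \Theta(\epsA)$). Then property (iii) for the rescaled run reads $\radiusX{\ball} \le \epsA'\ts\avgpX{\P} = \sqrt{2}\ts\epsA\ts\avgpX{\P} \le \sqrt{2}\ts\epsA\cdot\diamX{\P}/\sqrt{2} = \epsA\ts\diamX{\P}$, while the set-size bound becomes $4/\epsA'^2 + 9/2 = 2/\epsA^2 + 9/2 \le 4/\epsA^2 + 9/2$ and $k$ grows accordingly, so the stated bounds still hold (with room to spare). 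Either way, there is essentially no obstacle here beyond bookkeeping: the entire technical burden — the random partition, the Markov/alteration argument, and the merge of bad sets into good ones — was already discharged in the proof of \lemref{main:a}, and \thmref{main:a} only repackages that lemma with $\avgpX{\P}$ replaced by the larger quantity $\diamX{\P}$.
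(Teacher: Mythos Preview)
Your proposal is correct and matches the paper's approach: the paper invokes \lemref{main:a} with the rescaled parameter $\sqrt{2}\epsA$ and then applies \lemref{side:show} to convert the $\avgpX{\P}$ radius bound into a $\diamX{\P}$ bound, which is exactly your second route. Your first route (running \lemref{main:a} with $\epsA$ directly) also works for the theorem as stated, but the paper opts for the rescaled version; as you observed, this actually yields the sharper size bound $2/\epsA^2 + 9/2$, and is the reason the hypothesis $\epsA < 1/\sqrt{2}$ appears.
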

\begin{proof}
    Let $\diam = \diamX{\P}$.  Next we use the algorithm of
    \lemref{main:a} with parameter $\sqrt{2}\epsA$.  Observe that
    \begin{equation*}
        \radiusX{\ball}
        \leq%
        \sqrt{2}\epsA \avgp
        \leq%
        \sqrt{2}\epsA (\diam/\sqrt{2})
        =%
        \epsA \diam,
    \end{equation*}
    by \lemref{side:show}, where $\avgp = \avgpX{\P}$.

    Observe that the algorithm does not require the value of
    $\diamX{\P}$, but rather the value of $\avgpX{\P}$, which can be
    computed in $O(nd)$ time, see \Eqref{yo}.
\end{proof}

\begin{corollary}
    The expected running time of \thmref{main:a} can be improved to
    $O( nd)$, with two of the guarantees being weaker:
    \begin{compactenumI}
        \smallskip%
        \item The sets are bigger:
        $\forall i \ \cardin{\P_i} \leq 6/\epsA^2 + 7$.

        \smallskip%
        \item And there are fewer sets: $ k \geq n/(6/\epsA^2+7)$.
    \end{compactenumI}
\end{corollary}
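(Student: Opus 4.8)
The plan is to observe that this corollary is just \lemref{main:a} (as invoked inside the proof of \thmref{main:a}) re-run with a larger value of the constant $\zeta$, and to pinpoint why the original choice of $\zeta$ is what forces the extra $1/\epsA^2$ factor in the running time. In \lemref{main:a} the constant is taken to be $\zeta = 2(1+\epsA^2/8)$, i.e.\ barely above $2$, so that the group size $M = \ceil{\zeta/\epsA^2}$ (hence the final set size) is as small as possible; the price is that a single random partition is "successful'' (strictly fewer bad groups than good groups) only with probability $\Omega(\epsA^2)$, which is exactly why $\Theta(1/\epsA^2)$ independent rounds are needed in expectation. So the first step is simply to fix $\zeta$ to be an absolute constant bounded away from $2$ — for concreteness $\zeta = 3$ (or $\zeta = 6$ together with the $\sqrt{2}\epsA$ reduction used in \thmref{main:a}). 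Then \lemref{sample:mean} says each of the $t = \floor{n/M}$ random groups $\Q_1,\dots,\Q_t$ is bad with probability $< 1/\zeta$, so the number $Z$ of bad groups satisfies $\Ex{Z} < t/3$, and Markov's inequality gives $\Prob{Z \ge t/2} < 2/3$. Hence one round succeeds with probability $> 1/3$, a constant, so the algorithm performs $O(1)$ rounds in expectation; since each round (random permutation, computing the $t$ centroids, comparing each to $\centroX{\P}$, and in the successful round the folding step) costs $O(nd)$, and $\avgpX{\P}$ is computed once in $O(nd)$ via \Eqref{yo}, the total expected time is $O(nd)$.

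The correctness argument is then inherited verbatim from \lemref{main:a} and \thmref{main:a}: running with parameter $\sqrt{2}\epsA$ and using \lemref{side:show} ($\avgp \le \diamX{\P}/\sqrt 2$), the ball $\ball = \ballY{\centroX{\P}}{\sqrt2\,\epsA\avgp}$ has radius at most $\epsA \ts \diamX{\P}$; every good group is $\epsA$-close, so its centroid lies in $\ball$; and when $Z < t/2$ we match each bad group $B$ to a distinct good group $G$ and replace the pair by $X = G \cup B$, whose convex hull contains $\centroX{G} \in \ball$, so $\CHX{X}\cap\ball\neq\emptyset$. The only quantitative change is that now $M = \ceil{\zeta/(2\epsA^2)} = \Theta(1/\epsA^2)$ with a larger constant, so every merged set has size at most $2(M+1)$; tracking the rounding exactly as in the proof of \lemref{main:a} and the remark following it yields $\cardin{\P_i} \le 6/\epsA^2 + 7$, and correspondingly $k \ge n/(6/\epsA^2+7)$. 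The hypothesis $n \gg 1/\epsA^4$ is still exactly what guarantees $t = \floor{n/M} > M$ (same shape of requirement as in the remark after \lemref{main:a}, with a larger absolute constant).

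I expect essentially no real obstacle here — the statement is a reparametrization of \lemref{main:a}, and the one point that needs a moment's care is verifying that taking $\zeta$ strictly greater than $2$ already makes the per-round success probability a \emph{constant} (not merely positive) through Markov, so that the geometric number of rounds has $O(1)$ expectation. That single observation is what buys the $1/\epsA^2$ speed-up, traded against the constant-factor blow-up in $\cardin{\P_i}$ and the matching decrease in $k$.
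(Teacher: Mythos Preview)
Your proposal is correct, but it is not the route the paper takes. The paper leaves $\zeta = 2(1+\epsA^2/8)$ untouched and instead relaxes the \emph{success criterion}: a round is now declared successful as soon as at least one third of the groups are good (i.e., $Z < 2t/3$), and in the merging step each good group absorbs up to two bad groups, so the final sets are triples rather than pairs. Markov then gives $\Prob{Z \ge \tfrac{2}{3}t} \le \tfrac{3}{4}$ directly from $\Ex{Z} \le t/\zeta$ with the original $\zeta$, yielding a constant success probability and the factor $3/2$ blow-up in set size that produces the $6/\epsA^2 + 7$ bound. Your approach instead moves the other lever: you keep the ``fewer bad than good'' threshold and the pairwise merging, but raise $\zeta$ to an absolute constant strictly above $2$ so that Markov already gives $\Prob{Z \ge t/2} < 2/3$. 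Both arguments are equally short; yours pinpoints more transparently \emph{why} the $1/\epsA^2$ overhead appears in \lemref{main:a} (namely that $\zeta$ sits infinitesimally above $2$), and with $\zeta = 3$ together with the $\sqrt{2}\epsA$ reduction it in fact yields sets of size at most $3/\epsA^2 + O(1)$, slightly better than the stated bound. (Your parenthetical ``$\zeta = 6$ with the $\sqrt{2}\epsA$ reduction'' is unnecessary --- $\zeta = 3$ already works there --- but it does no harm.)
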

\begin{proof}
    We use \lemref{main:a} as before, but now requiring only a third
    of the sets to be good, and merging triples of sets to get one
    final good set. Thus, the new sets are bigger by a factor of
    $3/2$, compared \thmref{main:a}, as they are now the result of
    merging three sets instead of two.

    The probability of success is now constant, as \Eqref{key} becomes
    \begin{equation*}
        \Prob{Z \geq \smash{\frac{2}{3}t}\Bigr.}
        =%
        \Prob{Z \geq \smash{\frac{4}{3} \cdot \frac{t}{2}} \Bigr.}%
        =%
        \Prob{Z \geq \smash{\frac{4}{3}}\beta\Bigr.}%
        \leq%
        \frac{ \Ex{Z}}{(4/3)\beta}
        \leq
        \frac{3}{4}.
    \end{equation*}
    Namely, the partition succeeds with probability at least $1/4$,
    which implies that the algorithm is done in expectation after a
    constant number of partition rounds.
\end{proof}

\begin{remark}
    The (existential) result of Adiprasito \etal \cite[Theorem
    1.3]{abmt-tchtw-20} has slightly worse constants, but it requires
    some effort to see, as they ``maximize'' the number of sets $k$
    (instead of minimizing the size of each set). Specifically, they
    show that one can partition $\P$ into $k$ sets, with the computed
    ball having radius $(2+\sqrt{2})\sqrt{{k}/{n}} \, \diamX{\P}$
    (intuitively, one wants $k$ to be as large as
    possible). Translating into our language, they require that
    \begin{equation*}
        (2+\sqrt{2})\sqrt{\frac{k}{n}}
        \leq \epsA
        \implies
        (2+\sqrt{2})^2\frac{k}{n}\leq \epsA^2
        \implies
        k\leq n \frac{\epsA^2}{(2+\sqrt{2})^2}.
    \end{equation*}
    Our result, on the other hand, states that $k$ is at least
    (over-simplifying for clarity) $n \tfrac{ \epsA^2}{2}$ (for
    $\epsA$ sufficiently small).  Adiprasito \etal mention, as a side
    note, that their constant improves to $1+\sqrt{2}$ under certain
    conditions. Even then, the constant in the above theorem is
    better.

    This improvement in the constant is small (and thus, arguably
    minor), but nevertheless, satisfying.
\end{remark}

\subsection{Tverberg halving}
\seclab{halving}

An alternative approach is to randomly halve the point set and observe
that the centroids of the two halves are close together. In this section,
we show that this line of thinking leads to various algorithms which can be
derandomized efficiently. Foundational to this approach is the
following lemma (which is a variant of \lemref{sample:mean}).
\begin{lemma}
    \lemlab{sample:mean:2}%
    Let $U = \{u_1, \ldots, u_{2n}\}$ be a set of $2n$ points in
    $\Re^d$ with $\diam = \diamX{U}$.  For $i=1,\ldots, n$, with
    probability $1/2$, let $p_i = u_{2i-1}, \q_i = u_{2i}$, or
    otherwise, let $p_i = u_{2i}, \q_i = u_{2i-1}$. Let
    $\P = \{ p_1,\ldots, p_n\}$ and $\Q = \{\q_1, \ldots, \q_n\}$. For
    any parameter $t \geq 1$, we have
    \begin{math}
        \Prob{ \dY{\centroX{\P}}{ \centroX{\Q}}%
           \geq%
           \frac{t}{\sqrt{n}} \diam } \leq \frac{1}{t^2}.
    \end{math}
\end{lemma}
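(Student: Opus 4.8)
The plan is to mimic the proof of \lemref{sample:mean}: bound the second moment of $\centroX{\P}-\centroX{\Q}$ and finish with Markov's inequality. The first step is to recognize that the difference of the two centroids is an average of independent, mean-zero random vectors. For $i \in \IRX{n}$ put $v_i = u_{2i-1} - u_{2i}$ and let $X_i = p_i - q_i$. By the definition of the random halving, $X_i = v_i$ or $X_i = -v_i$, each with probability $1/2$, and the $X_1,\ldots,X_n$ are mutually independent. Since $\centroX{\P} = \tfrac1n\sum_{i=1}^n p_i$ and $\centroX{\Q} = \tfrac1n\sum_{i=1}^n q_i$, we get $\centroX{\P} - \centroX{\Q} = \tfrac1n\sum_{i=1}^n X_i$, so that $\dY{\centroX{\P}}{\centroX{\Q}}^2 = \tfrac{1}{n^2}\normX{\sum_{i=1}^n X_i}^2$.

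Next I would compute $\Ex{\normX{\sum_i X_i}^2}$ by expanding the square into $\sum_{i} \Ex{\normX{X_i}^2} + 2\sum_{i<j} \Ex{\DotProdY{X_i}{X_j}}$. Each $X_i$ has mean zero, since $\Ex{X_i} = \tfrac12 v_i - \tfrac12 v_i = 0$; hence, for $i \neq j$, independence gives $\Ex{\DotProdY{X_i}{X_j}} = \DotProdY{\Ex{X_i}}{\Ex{X_j}} = 0$, so all cross terms vanish. For the diagonal terms, $\normX{X_i}^2 = \normX{v_i}^2 = \dY{u_{2i-1}}{u_{2i}}^2 \leq \diam^2$. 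Therefore $\Ex{\normX{\sum_i X_i}^2} \leq n\diam^2$, and consequently
\[
\Ex{\dY{\centroX{\P}}{\centroX{\Q}}^2} = \frac{1}{n^2}\Ex{\normX{\textstyle\sum_i X_i}^2} \leq \frac{\diam^2}{n}.
\]

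The last step is Markov's inequality applied to the nonnegative variable $\dY{\centroX{\P}}{\centroX{\Q}}^2$:
\[
\Prob{\dY{\centroX{\P}}{\centroX{\Q}} \geq \tfrac{t}{\sqrt{n}}\diam} = \Prob{\dY{\centroX{\P}}{\centroX{\Q}}^2 \geq \tfrac{t^2}{n}\diam^2} \leq \frac{\diam^2/n}{t^2\diam^2/n} = \frac{1}{t^2}.
\]
There is no real obstacle here; the one point to keep straight is that the cancellation of cross terms comes from genuine independence across the $n$ coordinate pairs, which is simpler than the sampling-without-replacement bookkeeping (via $\cpNR$) needed in \lemref{sample:mean}. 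I would also note briefly why the bound is essentially tight: taking $U$ to consist of $n$ copies of a single pair of points at distance $\diam$ (or two well-separated antipodal clusters) makes $\dY{\centroX{\P}}{\centroX{\Q}}$ concentrate around $\diam/\sqrt{n}$, matching the lemma up to constants.
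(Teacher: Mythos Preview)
Your proof is correct and follows essentially the same argument as the paper: write $\centroX{\P}-\centroX{\Q}$ as $\tfrac{1}{n}\sum_i X_i v_i$ with $X_i\in\{-1,+1\}$ independent Rademacher signs and $v_i=u_{2i-1}-u_{2i}$, expand the second moment so that cross terms vanish by $\Ex{X_iX_j}=0$, bound the diagonal by $\diam^2$, and finish with Markov. The only cosmetic difference is that you package the sign into the vector $X_i=p_i-q_i$ rather than as a scalar $X_i\in\{\pm1\}$ multiplying $v_i$; the computation is identical.
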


\begin{proof}
    This follows by adapting the argument used in the proof of
    \lemref{sample:mean}, and the details are included here for the
    sake of completeness.

    Let $v_i = u_{2i-1} - u_{2i}$. Consider the random variable
    \begin{math}
        Y = \centroX{\P} - \centroX{\Q} =%
        \sum_{i=1}^n \frac{X_i v_i}{n},
    \end{math}
    where $X_i \in \{-1,+1\}$ is picked independently with probability
    half. We first observe that
    \begin{compactenumi}
        \smallskip%
        \item $\Ex{Y} = \sum_{i=1}^n \Ex{X_i}v_i/n = 0$, \smallskip%
        \item $\Ex{X_i^2} = 1$, and \smallskip%
        \item for $i < j$, $\Ex{X_i X_j} = 0$.
    \end{compactenumi}
    \smallskip%
    Thus, we have
    \begin{align}
      \Ex{\bigl.\smash{\normX{Y }^2}} = \Ex{\DotProdY{Y}{Y}}%
      &=
        \Ex{\Bigl.
        \smash{
        \DotProdY{\smash{\sum\nolimits_{i=1}^n \frac{X_iv_i}{n}}}%
        {\,\smash{\sum\nolimits_{i=1}^n \frac{X_i v_i}{n}}  \Bigr.}}}
    	\nonumber\\
      &=%
        \frac{1}{n^2} \sum_{i=1}^n \Ex{X_i^2} v_i^2
        +
        2 \frac{1}{n^2} \sum_{i < j} \Ex{X_i X_j v_i v_j}
        \nonumber\\
      &=%
        \frac{1}{n^2} \sum_{i=1}^n  v_i^2
        \leq
        \frac{n \diam^2}{n^2}
        =
        \frac{ \diam^2}{n},
        \eqlab{yey:2}%
    \end{align}
    since
    $\normX{v_i} = \dY{u_{2i-1}}{u_{2i}} \leq \diamX{U} = \diam$. By
    Markov's inequality, we have
    \begin{align*}
      \Prob{\normX{Y} > t \smash{\frac{\diam}{\sqrt{n}}}\Bigr.\ts}
      =
      \Prob{\Bigl. \smash{\normX{Y}^2 > t^2 \frac{\diam^2}{n}}}
      \leq%
      \frac{\Ex{\smash{\normX{Y}^2}\bigr.} }{t^2 \diam^2/n}
      \leq%
      \frac{1}{t^2}.
    \end{align*}
\end{proof}

\noindent%
\textbf{Remarks.}
\begin{compactenumA}
    \smallskip%
    \item \lemref{sample:mean:2} can be turned into an efficient
    algorithm using the same Markov's inequality argument used in
    \thmref{main:a}.  Specifically, for any parameter
    $\espB \in (0,1)$, one can compute a partition into two sets $\P$
    and $\Q$ with
    $\dY{\centroX{\P}}{ \centroX{\Q}} \leq (1+\espB)\diam/\sqrt{n}$,
    with probability of failure at most $\frac{1}{(1+\espB)^2}$.
    Thus, the probability of success in each round is at least
    \begin{equation*}
        1- \frac{1}{(1+\espB)^2}
        =%
        \frac{\espB(2+\espB)}{(1+\espB)^2}%
        \geq%
        \frac{2 \espB}{4}
        =
        \frac{ \espB}{2}.
    \end{equation*}
    Then, in expectation, the algorithm performs $O(1 /\espB)$ rounds,
    resulting in $O(n d /\espB)$ expected runtime overall.

    \smallskip%
    \item \lemref{sample:mean:2} implies that there exists a partition
    $\P$ and $\Q$ of $U$ such that
    \begin{equation*}
        \dY{\centroX{\P}}{ \centroX{\Q}} \leq \diam/\sqrt{n}.
    \end{equation*}
    The example is essentially tight, see \lemref{side:show}.

    \smallskip%
    \item As in the standard algorithm for computing a $\epsA$-net via
    discrepancy \cite{c-dmrc-01,m-gd-99}, one can apply repeated
    halving to get the desired Tverberg partition until the sets are
    the desired size. This provides a method for a
    deterministic algorithm, which we present in \secref{halving:det}.
\end{compactenumA}

\section{Applications}

\subsection{No-dimensional centerball}
\seclab{center:ball}

We present an efficient no-dimensional centerpoint theorem; the
previous version \cite[Theorem 7.1]{abmt-tchtw-20} did not present an
efficient algorithm.

\begin{corollary}[No-dimensional centerpoint]
    Let $\P$ be a set of $n$ points in $\Re^d$ and
    $\epsA \in (0, 1/2)$ be a parameter, where $n$ is sufficiently
    large (compared to $\epsA$). Then, one can compute, in
    $O(nd /\epsA^2)$ expected time, a ball $\ball$ of radius
    $\epsA \ts \diamX{\P}$, such that any halfspace containing $\ball$
    contains at least $\Omega(\epsA^2 n)$ points of $\P$.
\end{corollary}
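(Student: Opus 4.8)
The plan is to invoke \thmref{main:a} directly with the parameter $\epsA$. This is legitimate: $\epsA \in (0,1/2) \subset (0,1/\sqrt{2})$, and ``$n$ sufficiently large'' supplies the hypothesis $n \gg 1/\epsA^4$. The theorem produces, in $O(nd/\epsA^2)$ expected time, a partition $\P_1, \ldots, \P_k$ of $\P$ and a ball $\ball$ with $\radiusX{\ball} \leq \epsA\ts \diamX{\P}$, $\CHX{\P_i} \cap \ball \neq \emptyset$ for every $i$, and $k \geq n/(4/\epsA^2 + 9/2)$. If one insists on a ball of radius exactly $\epsA\ts\diamX{\P}$, enlarge $\ball$ concentrically; this only enlarges the body, so the convex hulls still meet it, while it shrinks the family of halfspaces that contain $\ball$, meaning any depth bound proved for the enlarged ball follows a fortiori from the one for the original.

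Next I would establish the depth guarantee. Fix any closed halfspace $\halfplane$ with $\ball \subseteq \halfplane$. For each $i$, the set $\CHX{\P_i}$ contains a point of $\ball$, hence a point of $\halfplane$. The complement of a closed halfspace is an open halfspace, which is convex; so if $\P_i$ lay entirely in that complement, then $\CHX{\P_i}$ would lie entirely in it as well, a contradiction. Therefore $\halfplane$ contains at least one point of $\P_i$. Since $\P_1, \ldots, \P_k$ are pairwise disjoint, $\halfplane$ contains at least $k$ points of $\P$. Finally, because $\epsA < 1/2$ we have $4/\epsA^2 + 9/2 = O(1/\epsA^2)$ (concretely, $\epsA^2 < 1/4$ gives $9/2 < 9/(8\epsA^2)$, so the denominator is at most $(41/8)/\epsA^2$), hence $k \geq n/(4/\epsA^2 + 9/2) = \Omega(\epsA^2 n)$. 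Combining the three steps yields the claimed ball, depth bound, and running time.

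I do not expect a genuine obstacle here; the corollary is essentially a repackaging of \thmref{main:a}. The only points requiring care are the two routine observations above: (i) that ``$\halfplane$ contains $\ball$'' forces a point of each $\P_i$ into $\halfplane$, via convexity of the open complementary halfspace, and (ii) the constant bookkeeping needed to absorb the additive $9/2$ into an $\Omega(\epsA^2 n)$ bound and to confirm that ``$n$ sufficiently large'' in the corollary matches the $n \gg 1/\epsA^4$ requirement inherited from \thmref{main:a}.
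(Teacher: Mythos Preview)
Your proposal is correct and follows exactly the paper's approach: apply \thmref{main:a} and observe that any halfspace containing $\ball$ must pick up at least one point from each $\P_i$, giving $k = \Omega(\epsA^2 n)$ points. You have simply supplied the details (the convexity argument, the enlargement to exact radius, and the constant bookkeeping) that the paper's two-sentence proof leaves implicit.
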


\begin{proof}
    Follows by applying \thmref{main:a} and the observation that, for
    any halfspace containing the computed ball $\ball$, it must also
    contain at least one point from each set of the partition
    $\P_1, \dots, \P_k$, where $k = \Omega(\epsA^2 n)$. Thus, the ball
    $\ball$ is as desired.
\end{proof}

\subsection{No-dimensional weak \TPDF{$\eps$}{eps}-net theorem}
\seclab{weak}

Originally given by Adiprasito \etal \cite[Theorem
7.3]{abmt-tchtw-20}, we prove a version of the no-dimensional weak
$\eps$-net theorem with an improved dependence on the parameters.  For
a sequence $\Q = (q_1,\ldots, q_r ) \in \P^r$, let
$\centroX{\Q} = \sum_{i=1}^r q_i / r$.  We reprove
\lemref{sample:mean} under a slightly different sampling model.

\begin{lemma}
    \lemlab{sample:mean:w:r}%
    Let $\P$ be a set of $n$ points in $\Re^d$, and
    $\epsA \in (0,1/2)$ and $\zeta > 1$ be parameters.  Let
    $r \geq \zeta /\epsA^2$.  For a random sequence
    $\Q = (\q_1,\ldots, \q_r)$ picked uniformly at random from $\P^r$,
    we have that
    $\Prob{\dY{\centroX{\P}}{\centroX{\Q}} > \epsA \diam} \leq
    1/\zeta$, where $\diam = \diamX{\P}$.
\end{lemma}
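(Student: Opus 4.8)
The plan is to follow the proof of \lemref{sample:mean} almost verbatim; the only change is that sampling from $\P^r$ makes $\q_1,\ldots,\q_r$ mutually independent, which is in fact \emph{simpler} than the without-replacement setting — the pairwise interaction term (the quantity $\cpNR$ of \Eqref{pair:ex}) is no longer needed.

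As before, translate so that $\centroX{\P}=0$; this is legitimate because $\dY{\centroX{\P}}{\centroX{\Q}}$ is translation invariant. Put $Y=\sum_{i=1}^r \q_i$, so that $\centroX{\Q}=Y/r$. Each $\q_i$ is uniform over $\P$, hence $\Ex{\q_i}=\centroX{\P}=0$, and $\Ex{Y}=0$ by linearity of expectation.

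Next I would bound the second moment of $\normX{Y}$. Expanding,
\[
  \Ex{\normX{Y}^2}=\Ex{\DotProdY{Y}{Y}}=\sum_{i=1}^r \Ex{\normX{\q_i}^2}+2\sum_{i<j}\Ex{\DotProdY{\q_i}{\q_j}}.
\]
For $i\neq j$ the variables $\q_i$ and $\q_j$ are independent, so $\Ex{\DotProdY{\q_i}{\q_j}}=\DotProdY{\Ex{\q_i}}{\Ex{\q_j}}=0$. For the diagonal terms, since $\centroX{\P}=0\in\CHX{\P}$ we have $\normX{\q_i}=\dY{\q_i}{\centroX{\P}}\leq\diamX{\P}=\diam$ (the convex hull of $\P$ has the same diameter as $\P$), so $\Ex{\normX{\q_i}^2}\leq\diam^2$ — in fact $\Ex{\normX{\q_i}^2}=\avgpX{\P}^2\leq\diam^2/2$ by \lemref{side:show}, but the cruder bound is all that is needed here. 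Hence $\Ex{\normX{Y}^2}\leq r\diam^2$.

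Finally, apply Markov's inequality to $\normX{Y}^2$ and use $r\geq\zeta/\epsA^2$:
\[
  \Prob{\dY{\centroX{\P}}{\centroX{\Q}}>\epsA\diam}=\Prob{\normX{Y}^2>(r\epsA\diam)^2}\leq\frac{\Ex{\normX{Y}^2}}{(r\epsA\diam)^2}\leq\frac{r\diam^2}{(r\epsA\diam)^2}=\frac{1}{r\epsA^2}\leq\frac{1}{\zeta}.
\]
There is no genuine obstacle; the single point requiring a little care is the independence step that annihilates all the cross terms, which is exactly where the with-replacement model differs from \lemref{sample:mean}.
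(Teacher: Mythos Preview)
Your proof is correct and follows essentially the same route as the paper: translate so that $\centroX{\P}=0$, expand $\Ex{\normX{Y}^2}$, use independence of the $\q_i$ to kill the cross terms, and finish with Markov's inequality. The only cosmetic difference is that the paper records the diagonal contribution as exactly $r\avgp^2$ and then (implicitly) uses $\avgp\leq\diam$, whereas you bound it by $r\diam^2$ directly --- a distinction you already note yourself.
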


\begin{proof}
    The argument predictably follows the proof of
    \lemref{sample:mean}, and the reader can safely skip reading it,
    as it adds little new. Assume that
    $\centroX{\P} = \sum_{i=1}^n \tfrac{1}{n} \p_i = 0$. Let
    $\avgp^2 = \sum_{i=1}^n \tfrac{1}{n}\p_i^2$ and
    $\Bigl.Y = \sum_{i=1}^r \q_i$. Then,
    \begin{math}
    	\Ex{Y}%
        =%
        \sum_{i=1}^r \Ex{\q_i} = 0.
    \end{math}
    As $\normX{Y }^2 = \DotProdY{Y}{Y}$, it follows that
    \begin{align*}
      \Ex{\bigl.\smash{\normX{Y }^2}}
      &=%
      \Ex{\Bigl.\smash{\bigl(\sum_{i=1}^r \q_i \bigr)^2}}
      =%
      \Bigl.\smash{\sum_{k=1}^r \Ex{\q_k^2} +
      2 \sum_{i<j} \Ex{ \smash{\q_i\q_j}}
      }\\
        &=%
        \sum_{k=1}^r
        \sum_{i=1}^n \tfrac{1}{n}\p_i^2
        +
        2 \sum_{i<j} \Ex{ \q_i}\Ex{\smash{\q_j}}
        =r \avgp^2.
    \end{align*}

    Since $\centroX{R} = Y/r$, $r \geq \zeta/\epsA^2$, and by Markov's
    inequality, we have
    \begin{align*}
      \Prob{\normX{\centroX{R}} > \epsA \avgp}
      &=%
        \Prob{\frac{\normX{Y}}{r} > \epsA \avgp}
        =%
        \Prob{\normX{Y}^2 > (r \epsA \avgp)^2}
        \leq%
        \frac{\Ex{\smash{\normX{Y}^2}} \bigr.}{(r \epsA \avgp)^2}
      \leq
      \frac{r \avgp^2}{(r \epsA \avgp)^2}
      =%
      \frac{1}{r \epsA^2}
      \leq
      \frac{1}{\zeta}.
    \end{align*}
\end{proof}

A sequence $\Q \in \P^r$ \emphi{collides} with a ball $\ball$ if
$\ball$ intersects $\CHX{\Q}$.  In particular, if
$\dY{\centroX{\P}}{\centroX{\Q}} \leq \epsA \diam$, then $\Q$ collides
with the ball $\ballY{\centroX{\P}}{\epsA \diam}$, where
$\diam = \diamX{\P}$.

\begin{lemma}[Selection lemma]
    \lemlab{selection}
    Let $\P$ be a set of $n$ points in $\Re^d$ and $\epsA \in (0,1)$
    be a parameter. Let $r = \ceil{\smash{2/\epsA^2}}$.  Then, the
    ball $\ball = \ballY{\centroX{\P}}{ \epsA \diam }$ collides with
    at least $ n^r/2$ sequences of $\P^r$.
\end{lemma}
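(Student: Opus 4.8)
The plan is to read this off directly from Lemma~\lemref{sample:mean:w:r}, together with the observation recorded just before the statement that $\dY{\centroX{\P}}{\centroX{\Q}} \le \epsA \diam$ forces $\Q$ to collide with $\ball$. First I would set $\zeta = r\epsA^2$. Since $r = \ceil{\smash{2/\epsA^2}} \ge 2/\epsA^2$, this gives $\zeta \ge 2 > 1$, and the relation $r \ge \zeta/\epsA^2$ holds, so the hypotheses of Lemma~\lemref{sample:mean:w:r} are met.

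Next, apply Lemma~\lemref{sample:mean:w:r} to a sequence $\Q = (\q_1,\ldots,\q_r)$ drawn uniformly at random from $\P^r$. It yields $\Prob{\dY{\centroX{\P}}{\centroX{\Q}} > \epsA\diam} \le 1/\zeta \le 1/2$, so with probability at least $1/2$ the random sequence satisfies $\dY{\centroX{\P}}{\centroX{\Q}} \le \epsA\diam$. As noted above, $\centroX{\Q} \in \CHX{\Q}$, and when this inequality holds we have $\centroX{\Q} \in \ball = \ballY{\centroX{\P}}{\epsA\diam}$, so $\CHX{\Q} \cap \ball \neq \varnothing$ and $\Q$ collides with $\ball$.

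Finally, translate the probabilistic statement into a counting one: the uniform law on $\P^r$ assigns mass $1/n^r$ to each of the $n^r$ sequences, so an event of probability at least $1/2$ must contain at least $n^r/2$ of them. Hence at least $n^r/2$ sequences of $\P^r$ collide with $\ball$, as claimed. There is essentially no real obstacle here — the only points to be careful about are checking $\zeta > 1$ strictly so that the lemma applies (immediate from $r \ge 2/\epsA^2$), and noting that the sequence-centroid $\centroX{\Q}$ used in the collision definition is exactly the multiplicity-weighted average appearing in Lemma~\lemref{sample:mean:w:r}; everything else is bookkeeping.
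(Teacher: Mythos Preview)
Your proposal is correct and follows essentially the same approach as the paper: apply \lemref{sample:mean:w:r} with $\zeta \ge 2$ to conclude that a uniformly random $r$-sequence collides with $\ball$ with probability at least $1/2$, then convert this to a count. The paper simply sets $\zeta = 2$ (using $r = \ceil{2/\epsA^2} \ge \zeta/\epsA^2$) rather than your $\zeta = r\epsA^2$, but the argument is the same.
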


\begin{proof}
    Taking $\zeta=2$, by \lemref{sample:mean:w:r}, a random
    $r$-sequence from $\P^r$ has probability at least half to collide
    with $\ball$, which readily implies that this property holds for
    half the sequences in $\P^r$.
\end{proof}

\begin{theorem}[No-dimensional weak \TPDF{$\eps$}{eps}-net]
    Let $\P$ be a set of $n$ points in $\Re^d$, with diameter $\diam$,
    and $\epsA, \eps \in (0,1)$ be parameters, where $2/\epsA^2$ is an
    integer.  Then, there exists a set $F \subset \Re^d$ of
    $\leq 2 \eps^{-2/\epsA^2}\Bigr.$ balls, each of radius
    $\epsA \ts \diam$, such that, for all $Y \subset \P$, with
    $\cardin{Y} \geq \eps n$, $F$ contains a ball of radius
    $\epsA \diam$ that intersects $\CHX{Y}$.
\end{theorem}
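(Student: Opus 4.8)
The plan is a greedy covering argument over the $r$-sequences of $\P$, with $r = 2/\epsA^2$, driven by the Selection Lemma (\lemref{selection}) applied to each heavy subset rather than to $\P$ itself. I maintain a family $F$ of balls of radius $\epsA\ts\diamX{\P}$, initially empty, together with the set $T \subseteq \P^r$ of sequences $\sigma$ such that $\CHX{\sigma}$ is not met by any ball currently in $F$; initially $\cardin{T} = n^r$. I repeat the following step: if some $Y \subseteq \P$ with $\cardin{Y} \geq \eps n$ is \emph{unserved} — meaning no ball of $F$ intersects $\CHX{Y}$ — pick one such $Y$ and add the ball $b_Y = \ballY{\centroX{Y}}{\epsA\ts\diamX{\P}}$ to $F$. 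When no unserved heavy set remains, every heavy $Y$ is served, which is exactly the assertion of the theorem; it only remains to bound $\cardin{F}$, equivalently the number of iterations.

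Two observations do the work. First, if $Y$ is unserved then $Y^r \subseteq T$: any $\sigma \in Y^r$ satisfies $\CHX{\sigma} \subseteq \CHX{Y}$, so a ball of $F$ meeting $\CHX{\sigma}$ would meet $\CHX{Y}$ and serve $Y$; hence all $\cardin{Y}^r \geq (\eps n)^r$ sequences of $Y^r$ currently lie in $T$. Second, apply \lemref{selection} to the point set $Y$ with the same parameter $\epsA$ (so $\ceil{2/\epsA^2} = 2/\epsA^2 = r$, using that $2/\epsA^2$ is integral): the ball $\ballY{\centroX{Y}}{\epsA\ts\diamX{Y}}$ collides with at least $\cardin{Y}^r/2$ sequences of $Y^r$. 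Since $Y \subseteq \P$ forces $\diamX{Y} \leq \diamX{\P}$, the larger concentric ball $b_Y = \ballY{\centroX{Y}}{\epsA\ts\diamX{\P}}$ — which has exactly the radius demanded by the theorem — collides with all of those sequences too. Adding $b_Y$ to $F$ therefore deletes at least $\cardin{Y}^r/2 \geq (\eps n)^r/2$ sequences from $T$, and serves $Y$ itself since $\centroX{Y} \in \CHX{Y}$.

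Since $T$ starts with $n^r$ sequences and each iteration removes at least $(\eps n)^r/2$ of them, the process runs for at most $n^r \big/ \bigl((\eps n)^r/2\bigr) = 2\eps^{-r} = 2\eps^{-2/\epsA^2}$ iterations, so $\cardin{F} \leq 2\eps^{-2/\epsA^2}$ and every ball of $F$ has radius $\epsA\ts\diamX{\P}$ — the stated conclusion. I do not anticipate a genuine obstacle: once \lemref{selection} is available this is a standard greedy/charging scheme, and the only points needing care are the harmless radius enlargement from $\diamX{Y}$ to $\diamX{\P}$ and verifying that the Selection Lemma applies verbatim to each heavy $Y$ (it does, as $\epsA \in (0,1)$ and $r = \ceil{2/\epsA^2}$ by the integrality assumption); the degenerate case of very small $Y$, where the balls shrink to points, leaves the count unaffected.
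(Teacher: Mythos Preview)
Your proposal is correct and follows essentially the same greedy argument as the paper's proof (which the paper attributes to Alon \etal \cite{abfk-pswec-92}). You are in fact slightly more careful than the paper on two points the paper glosses over: the enlargement of the ball's radius from $\epsA\ts\diamX{Y}$ to $\epsA\ts\diamX{\P}$ after invoking \lemref{selection} on $Y$, and the explicit verification that $Y^r \subseteq T$ so that the removed sequences are actually new.
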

\begin{proof}
    Our argument follows Alon \etal \cite{abfk-pswec-92}.  Let
    $r = \smash{2/\epsA^2}$.  Initialize $F = \varnothing$, and
    let $\TSet = {\P}^r$. If there is a set $\Q \subset \P$, with
    $\cardin{\Q} \geq \eps n$, where no ball of $F$ intersects
    $\CHX{\Q}$, then applying \lemref{selection} to $\Q$, the
    algorithm computes a ball $\ball$, of radius $\epsA \diam$, that
    collides with at least $(\eps n)^r/2$ sequences of $\Q^r$.  The
    algorithm adds $\ball$ to the set $F$, and removes from $\TSet$
    all the sequences that collide with $\ball$. The algorithm
    continues till no such set $\Q$ exists.

    As initially $\cardin{\TSet} = n^r$, the number of iterations of
    the algorithm, and thus the size of $F$, is bounded by
    \begin{math}
        \frac{n^r}{(\eps n)^r/2} = 2/\eps^r.
    \end{math}
\end{proof}

\begin{remark}
    In the version given by Adiprasito \etal \cite[Theorem 7.3]{abmt-tchtw-20},
    the set $F$ has size at most $(2/\epsA^2)^{2/\epsA^2} \eps^{-2/\epsA^2}$,
    while our bound is $2 \eps^{-2/\epsA^2}$.
\end{remark}

\section{Derandomization}

\subsection{Derandomizing mean sampling}

\lemref{sample:mean} can be derandomized directly using conditional
expectations. We also present a more efficient derandomization scheme
using halving in \secref{derandomize:halving:scheme}.

\begin{lemma}
    \lemlab{s:m:d:slow}%
    Let $\P$ be a set of $n$ points in $\Re^d$. Then, for any integer
    $r \geq 1$, one can compute, in deterministic $O(d n^3)$ time, a
    subset $R \subset \P$ of size $r$, such that
    \begin{math}
        \dY{\centroX{\P}}{\centroX{R}} \leq \avgpX{\P}/\sqrt{r} \leq %
        \diamX{\P}/ \sqrt{2r},
    \end{math}
    where $\avgp= \avgpX{\P}$, see \Eqref{yo}.
\end{lemma}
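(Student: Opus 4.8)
The plan is to derandomize the sampling argument behind \lemref{sample:mean} using the method of conditional expectations. Since $\dY{\centroX{\P}}{\centroX{R}}$, $\avgpX{\P}$ and $\diamX{\P}$ are translation invariant, I would assume $\centroX{\P} = \origin$; then for any $R \subseteq \P$ of size $r$ we have $\centroX{R} = Y/r$ with $Y = \sum_{\p \in R}\p$, so $\dY{\centroX{\P}}{\centroX{R}} = \normX{Y}/r$. Inequality \Eqref{y:squared} (using the sampling-without-replacement moments of \Eqref{pair:ex}) states exactly that a uniformly random $r$-subset $R$ satisfies $\Ex{\normX{Y}^2} \le r\,\avgp^2$, where $\avgp = \avgpX{\P}$. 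Hence it suffices to exhibit, deterministically, one $r$-subset $R$ with $\normX{Y}^2 \le r\,\avgp^2$: for such $R$ we get $\dY{\centroX{\P}}{\centroX{R}} \le \avgp/\sqrt{r}$, and $\avgp/\sqrt{r} \le \diamX{\P}/\sqrt{2r}$ follows from \lemref{side:show}.

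To construct $R$, I would fix an arbitrary order $\p_1,\ldots,\p_n$ of $\P$ and scan it once, maintaining a partial decision: a set $S$ of points committed to $R$ and a set $T$ of points excluded, with $\cardin{S}\le r$ and $\cardin{T}\le n-r$. Let $g(S,T) = \Ex{\normX{Y}^2 \mid S\subseteq R,\ R\cap T = \varnothing}$ be the conditional expectation when the remaining $r-\cardin{S}$ elements of $R$ are drawn uniformly from $\P\setminus(S\cup T)$. Writing $Y = \sum_{\p\in S}\p + \sum_{\p\in R\setminus S}\p$ and expanding the square, $g(S,T)$ depends only on $\sum_{\p\in S}\p$ together with the first two moments of a uniform random subset of $\P\setminus(S\cup T)$; these moments are given, just as in \Eqref{pair:ex} and \Eqref{y:squared}, by $\sum_{\p}\p$, $\sum_{\p}\normX{\p}^2$, and $\sum_{\p\neq\q}\DotProdY{\p}{\q}$, all over $\P\setminus(S\cup T)$, so $g(S,T)$ has an explicit closed form. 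When the scan reaches $\p_i$, I assign it to $S$ or to $T$ according to whichever of $g(S\cup\{\p_i\},T)$ and $g(S,T\cup\{\p_i\})$ is no larger; since $g(S,T) = \lambda\,g(S\cup\{\p_i\},T) + (1-\lambda)\,g(S,T\cup\{\p_i\})$ with $\lambda = \Prob{\p_i\in R \mid S\subseteq R,\ R\cap T=\varnothing}$, the chosen value is still at most $g(S,T)$. (Once $\cardin{S}=r$ or $\cardin{T}=n-r$ the choice is forced and leaves $g$ unchanged.) Starting from $g(\varnothing,\varnothing) = \Ex{\normX{Y}^2}\le r\,\avgp^2$ and keeping this invariant along the scan, the final set $R = S$ has size $r$ and $\normX{Y}^2 = g(S,T)\le r\,\avgp^2$, as required.

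For the time bound, the scan has $n$ steps, each evaluating $g$ a constant number of times, and a direct evaluation of $g$ that recomputes the needed sums over $\P\setminus(S\cup T)$ from scratch costs $O(dn)$ (or $O(dn^2)$ if one does not use the identity $\sum_{\p\neq\q}\DotProdY{\p}{\q} = \normX{\sum_{\p}\p}^2 - \sum_{\p}\normX{\p}^2$ and instead sums over pairs); a crude accounting therefore yields the claimed $O(dn^3)$. I would not optimize the bookkeeping here, since the efficient derandomization is deferred to \lemref{s:m:d:fast} via halving.

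The only real work is making the first sentences of the second paragraph precise: writing out the closed form of $g(S,T)$ and verifying it, because the conditioning alters both the size and the ground set of the residual random subset, so the pairwise inclusion probability of \Eqref{pair:ex} must be re-derived for the conditioned distribution. Everything afterward — coordinate-wise convexity of $g$, translation invariance, and the final appeal to \lemref{side:show} — is routine.
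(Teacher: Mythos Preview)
Your proposal is correct and follows essentially the same route as the paper: both derandomize \lemref{sample:mean} via the method of conditional expectations on the potential $\Ex{\normX{Y}^2}$, scanning the points in a fixed order, evaluating the conditional expectation in $O(dn^2)$ time per step via the explicit pairwise-inclusion formula, and appealing to \Eqref{y:squared} and \lemref{side:show} for the final bound. One small aside: the fast version in \lemref{s:m:d:fast} is not obtained ``via halving'' as you suggest but by careful $O(d)$-per-step bookkeeping of this same argument; the halving route is the separate \remref{derandomize:2}.
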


\begin{proof}
    We derandomize the algorithm of \lemref{sample:mean}.  We assume
    for simplicity of exposition that $\centroX{\P} = 0$. Let $\Sample$ be a sample of size $r$ without replacement from $\P$, and let $I_i \in \{0,1\}$ be the indicator for the event that \(\p_i \in \Sample\).

    Let \(Y = \sum_{i=1}^n I_i \p_i\). Then, \(\centroX{\Sample} = Y / r\), and thus
    \begin{math}
        \dY{\centroX{\Sample}}{ \centroX{\P}} =%
        {\normX{Y}}/{r}.
    \end{math}
    Consider the quantity
    \begin{align*}
      \beta
      =%
      Z(x_1, \dots, x_t) = \ExCond{\smash{\normX{Y}^2}\bigr. }{  \Event},
      \qquad
      \Event \equiv
      \pth{I_1 = x_1, \dots, I_t = x_t},
    \end{align*}
    where the expectation is over the random choices of
    $I_{t+1},\ldots, I_n$. At the beginning of the $(t+1)$\th
    iteration, the values of $x_1, \ldots, x_{t}$ were determined in
    earlier iterations, and the task at hand is to decide what value
    to assign to $x_{t+1}$ that minimizes
    $Z(x_1, \dots, x_t,x_{t+1})$.  Thus, the algorithm computes
    $\beta_0 = Z(x_1, \dots, x_t,0)$ and
    $\beta_1 = Z(x_1, \dots, x_t,1)$.

    Using conditional expectations, \Eqref{y:squared} becomes
    \begin{equation}
        \beta%
        =%
        \ExCond{\smash{\normX{Y }^2}\bigl.}{\Event}
        =%
        \sum_{i=1}^{n} \ExCond{I_i}{\Event} \p_i^2 +
        2 \sum_{i<j} \ExCond{I_i I_j}{\Event} \p_i \p_j.
        \eqlab{compute:beta}
    \end{equation}
    Let $\alpha = \sum_{k=1}^t x_k$, and observe that $r-\alpha$
    points are left to be chosen to be in $\Sample$ after
    \(\Event\). As such, arguing as in \Eqref{pair:ex}, for $i < j$,
    we have
    \begin{equation}
        \ExCond{I_i}{\Event}
        =
        \begin{cases}
          x_i & i \leq t\\
          \frac{r-\alpha}{n-t} & i >t,
        \end{cases}
        \qquad\text{and}\qquad
    	\ExCond{I_i {}I_j}{\Event} %
        =%
        \begin{cases}
          x_i x_j
          &
            i < j \leq t
          \\%%
          x_i \frac{r-\alpha}{n-t}
          &
            i \leq t <j
          \\[0.1cm]%
          \frac{(r-\alpha)(r-\alpha-1)}{(n-t)(n-t-1)}
          &
            t <i < j.
        \end{cases}
        \eqlab{cond:expect:f}%
    \end{equation}
    This implies that the algorithm can compute $\beta$ in quadratic
    time directly via \Eqref{compute:beta}.  Similarly, the algorithm
    computes $\beta_0$ and $\beta_1$. Observe that
    \begin{equation*}
        \beta = Z(x_1, \dots, x_t)
        =
        \frac{r-\alpha}{n-t}
        \beta_1
        +
        \frac{n-t - (r -\alpha)}{n-t} \beta_0.
    \end{equation*}
    Namely, $\beta$ is a convex combination of $\beta_0$ and
    $\beta_1$. Thus, if $\beta_0 \leq \beta$ then the algorithm sets
    $x_{t+1}=0$, and otherwise the algorithm sets
    $x_{t+1}=1$.

    The algorithm now performs $n$ such assignment steps, for
    $t=0,\ldots, n-1$, to compute an assignment of $x_1, \ldots, x_n$
    such that $Z(x_1, \ldots, x_n) \leq \Ex{\smash{\normX{Y}^2}}$.
    Overall, this leads to a $O(dn^3)$ time algorithm. Specifically,
    the algorithm outputs a set $\Sample \subseteq \P$ of size $r$,
    such that
    \begin{math}
        \Sample = \Set{\p_i }{ x_i = 1, i=1,\ldots,n}.
    \end{math}
    Observe that
    $Z(x_1,\ldots, x_n) = \normX{ r \centroX{\Sample}}^2 \leq
    \Ex{\smash{\normX{Y}^2}}$. Thus, by \Eqref{y:squared} and
    \lemref{side:show}, we have
    \begin{equation*}
        \dY{\centroX{\Sample}}{\centroX{\P}}
        =%
        \normX{\centroX{\Sample}}
        \leq%
        \sqrt{\frac{\Ex{\smash{\normX{Y}^2}}\bigr.}{r^2}}
        \leq%
        \sqrt{\frac{r \avgp^2}{r^2}}
        =%
        \frac{\avgp}{\sqrt{r}}
        \leq
        \frac{\diamX{\P}}{\sqrt{2r}}.
    \end{equation*}
\end{proof}

With some care, the running time of the algorithm of
\lemref{s:m:d:slow} can be improved to $O( dn)$ time, but the details
are tedious, and we delegate the proof of the following lemma to
\apndref{derandomized_running_time}.

\SaveContent{\LemmaSMDBody}%
{ Let $\P$ be a set of $n$ points in $\Re^d$. Then, for any integer
   $r \geq 1$, one can compute, in $O(d n)$ deterministic time, a
   subset $R \subset \P$ of size $r$, such that
    \begin{math}
        \dY{\centroX{\P}}{\centroX{R}}%
        \leq%
        \avgpX{\P}/\sqrt{r}%
        \leq %
        \diamX{\P}/ \sqrt{2r}.
    \end{math}
 }

\begin{lemma}
    \lemlab{s:m:d:fast}%
    \LemmaSMDBody{}
\end{lemma}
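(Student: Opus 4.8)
The plan is to revisit the derandomization of Lemma~\ref{lemma:s:m:d:slow} and show that the conditional-expectation quantity $\beta = Z(x_1,\dots,x_t)$ can be maintained incrementally rather than recomputed from scratch at each step. The bottleneck in the $O(dn^3)$ bound is that each of the $n$ rounds recomputes $\beta_0$ and $\beta_1$ in $\Theta(dn^2)$ time via the double sum in \Eqref{compute:beta}. First I would expand $\beta$ using the explicit conditional expectations \Eqref{cond:expect:f}, separating the sum into a ``decided'' part (indices $\le t$) and an ``undecided'' part (indices $> t$). Writing $S = \sum_{i : x_i = 1} \p_i$ for the running partial sum of chosen points and $T = \sum_{i > t} \p_i$ for the sum of the remaining points, the cross terms collapse: the double sum over $i<j$ reduces to expressions built from $\normX{S}^2$, $\DotProdY{S}{T}$, $\normX{T}^2$, $\sum_{i>t}\p_i^2$, and the scalar coefficients $\tfrac{r-\alpha}{n-t}$ and $\tfrac{(r-\alpha)(r-\alpha-1)}{(n-t)(n-t-1)}$. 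Each of these aggregate quantities is a single vector or scalar in $\Re^d$, so once they are known, $\beta$, $\beta_0$, and $\beta_1$ are each evaluated in $O(d)$ time.

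The key step is then to observe that $S$, $T$, $\sum_{i>t}\p_i^2$, and $\alpha$ can all be updated in $O(d)$ time when we move from round $t$ to round $t+1$: if we set $x_{t+1}=1$ we do $S \mathrel{+}= \p_{t+1}$ and $\alpha \mathrel{+}= 1$; in either case $T \mathrel{-}= \p_{t+1}$ and $\sum_{i>t}\p_i^2 \mathrel{-}= \p_{t+1}^2$. An $O(dn)$ preprocessing pass computes the initial values $T = \sum_{i=1}^n \p_i$ (which is $0$ after the translation normalization, but we keep it general), $\sum_i \p_i^2$, and the centroid. The decision rule is unchanged: compute $\beta_0, \beta_1$ from the aggregates, check whether $\beta$ is the stated convex combination, and pick the branch with $\beta_b \le \beta$. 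Since the correctness of the output set $R$ and the distance bound $\dY{\centroX{\P}}{\centroX{R}} \le \avgpX{\P}/\sqrt{r} \le \diamX{\P}/\sqrt{2r}$ depend only on the invariant $Z(x_1,\dots,x_n) \le \Ex{\normX{Y}^2} \le r\avgp^2$ — which is preserved step by step exactly as in Lemma~\ref{lemma:s:m:d:slow} — the analysis carries over verbatim; only the running time changes.

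Summing up, the algorithm performs $O(dn)$ preprocessing followed by $n$ rounds, each doing $O(d)$ work, for $O(dn)$ total. The main obstacle is purely bookkeeping: carefully re-deriving the closed form of $\beta$ in terms of the aggregates $\normX{S}^2$, $\DotProdY{S}{T}$, $\normX{T}^2$, $\sum_{i>t}\p_i^2$ so that the three scalar coefficients from \Eqref{cond:expect:f} are correctly matched to the corresponding bilinear pieces, and checking that the convex-combination identity $\beta = \tfrac{r-\alpha}{n-t}\beta_1 + \tfrac{n-t-(r-\alpha)}{n-t}\beta_0$ still holds in this reorganized form. This is the ``tedious but straightforward'' calculation alluded to in the lemma statement, and it is exactly why the full argument is delegated to \apndref{derandomized_running_time}.
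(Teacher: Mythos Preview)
Your proposal is correct and follows the same high-level strategy as the paper: derandomize \lemref{sample:mean} by conditional expectations, and observe that the conditional quantity $\beta = Z(x_1,\dots,x_t)$ can be maintained incrementally in $O(d)$ time per round rather than recomputed in $\Theta(dn^2)$.

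The bookkeeping is organized slightly differently. The paper keeps four prefix-sum arrays $P_i,Q_i,P^x_i,Q^x_i$ and decomposes the off-diagonal sum $C$ into three pieces $L_t,M_t,R_t$, updating $L_t$ forward from $L_{t-1}$ and precomputing the tail sums $R_t$ backward in a separate $O(dn)$ pass. Your version instead maintains the vector aggregates $S=\sum_{i\le t: x_i=1}\p_i$ and $T=\sum_{i>t}\p_i$ together with the scalars $\normX{S}^2,\DotProdY{S}{T},\normX{T}^2,\sum_{i>t}\p_i^2$, implicitly using the identity $2\sum_{i<j}a_ia_j=(\sum_i a_i)^2-\sum_i a_i^2$ to collapse $L_t$ and $R_t$. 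A pleasant consequence is that the term $Q_t^x=\sum_{i\le t}x_i\p_i^2$ appearing in the paper's $B$ and in $L_t$ cancels in the sum $B+2C$, so you genuinely need one fewer running quantity than the paper tracks. Both organizations yield $O(dn)$; yours avoids the backward precomputation of $R_t$ at the cost of doing the algebra to see the cancellation, which is exactly the ``tedious but straightforward'' step you identify.
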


\subsection{Derandomizing the halving scheme}
\seclab{derandomize:halving:scheme}

The algorithm of \lemref{sample:mean:2} can be similarly derandomized.

\begin{lemma}
    \lemlab{sample:mean:3}
    Let $U = \{u_1, \ldots, u_{2n}\}$ be a set of $2n$ points in
    $\Re^d$ with $\diam = \diamX{U}$. One can partition $U$, in
    deterministic $O( d n)$ time, into two equal size sets $\P$ and
    $\Q$, such that
    \begin{math}
        \dY{\centroX{\P}}{ \centroX{\Q}} \leq \diam / \sqrt{n}.
    \end{math}
\end{lemma}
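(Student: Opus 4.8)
The plan is to derandomize \lemref{sample:mean:2} via the method of conditional expectations, following the standard template for discrepancy-type algorithms as used for $\eps$-nets \cite{c-dmrc-01,m-gd-99}. Recall the setup of \lemref{sample:mean:2}: with $v_i = u_{2i-1}-u_{2i}$, we have $Y = \centroX{\P}-\centroX{\Q} = \sum_{i=1}^n X_i v_i / n$ with signs $X_i \in \{-1,+1\}$, and the second-moment computation \Eqref{yey:2} shows $\Ex{\normX{Y}^2} = \tfrac{1}{n^2}\sum_{i=1}^n v_i^2 \leq \diam^2/n$. So an assignment of signs achieving $\normX{Y}^2 \le \Ex{\normX{Y}^2} \le \diam^2/n$ exists, and fixing such signs yields $\dY{\centroX{\P}}{\centroX{\Q}} = \normX{Y} \le \diam/\sqrt{n}$, which is exactly the claimed bound.

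First I would set up the pessimistic estimator. At the start of step $t+1$, signs $x_1,\dots,x_t$ are fixed, and we consider $\Phi_t = \ExCond{\normX{Y}^2}{X_1 = x_1, \dots, X_t = x_t}$, where the remaining $X_{t+1},\dots,X_n$ are independent random signs. Writing $Y = \tfrac{1}{n}\bigl(S_t + \sum_{i>t} X_i v_i\bigr)$ with $S_t = \sum_{i\le t} x_i v_i$ the fixed partial sum, expanding the square and using $\Ex{X_i} = 0$ and $\Ex{X_iX_j}=0$ for the unfixed indices gives the closed form $\Phi_t = \tfrac{1}{n^2}\bigl(\normX{S_t}^2 + \sum_{i>t} \normX{v_i}^2\bigr)$. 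This is a convex combination (indeed the average) of $\Phi_{t+1}$ under the two choices $x_{t+1} = +1$ and $x_{t+1} = -1$ — concretely, $\Phi_{t+1}^{(\pm)} = \tfrac{1}{n^2}\bigl(\normX{S_t \pm v_{t+1}}^2 + \sum_{i>t+1}\normX{v_i}^2\bigr)$, and $\tfrac12(\normX{S_t+v_{t+1}}^2 + \normX{S_t-v_{t+1}}^2) = \normX{S_t}^2 + \normX{v_{t+1}}^2$. So greedily picking the sign minimizing $\normX{S_t \pm v_{t+1}}^2$ ensures $\Phi_{t+1}\le \Phi_t$, and after $n$ steps $\normX{Y}^2 = \Phi_n \le \Phi_0 = \tfrac1{n^2}\sum_i \normX{v_i}^2 \le \diam^2/n$.

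For the running time, the key observation is that we never need to recompute $\Phi_t$ from scratch: maintain the running vector $S_t \in \Re^d$, and at step $t+1$ choosing the better sign amounts to comparing $\DotProdY{S_t}{v_{t+1}}$ against $0$ (since $\normX{S_t \pm v_{t+1}}^2 = \normX{S_t}^2 + \normX{v_{t+1}}^2 \pm 2\DotProdY{S_t}{v_{t+1}}$, so we pick the sign of $-\DotProdY{S_t}{v_{t+1}}$), then update $S_{t+1} = S_t \pm v_{t+1}$. Each step costs $O(d)$ (one dot product, one vector addition), the $v_i$ are computed once in $O(dn)$ total, and the diameter bound $\diam = \diamX{U}$ need not even be computed since it only appears in the analysis. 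Hence the total time is $O(dn)$. I expect no real obstacle here; the only mild subtlety is being careful that the greedy rule compares the correct quantity (a sign choice, so it reduces to the sign of a single dot product), and that ties may be broken arbitrarily since $\Phi_{t+1}\le\Phi_t$ holds with equality in that case.
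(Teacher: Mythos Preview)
Your proposal is correct and follows essentially the same approach as the paper: both derandomize \lemref{sample:mean:2} via conditional expectations on $\normX{Y}^2$, maintain the running partial sum $S_t=\sum_{i\le t}x_iv_i$, and choose $x_{t+1}$ according to the sign of $\DotProdY{S_t}{v_{t+1}}$ for $O(d)$ per step. Your closed form $\Phi_t=\tfrac{1}{n^2}\bigl(\normX{S_t}^2+\sum_{i>t}\normX{v_i}^2\bigr)$ is exactly the paper's expression $Z(x_1,\dots,x_t)=\tfrac{1}{n^2}\sum_i v_i^2+\tfrac{2}{n^2}\sum_{i<j\le t}x_ix_jv_iv_j$ rewritten, and the greedy rule and running-time analysis coincide.
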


\begin{proof}
    We follow \lemref{sample:mean:2}. To this
    end, let $v_i = u_{2i-1} - u_{2i}$, for $i=1,\ldots, n$. Let
    \begin{math}
        Y = \sum_{i=1}^n \frac{X_i v_i}{n},
    \end{math}
    where $X_i \in \{-1,+1\}$.  Next, consider the quantity
    \begin{equation*}
        Z(x_1, \dots, x_t) = \ExCond{\smash{\normX{Y}^2}\bigr.\! }{ \Event},
        \qquad
        \Event \equiv \pth{X_1 = x_1, \dots, X_t
           = x_t},
    \end{equation*}
    where the expectation is over the random choices of
    $X_{t+1},\ldots, X_n$.  By \Eqref{yey:2}, we have
    \begin{math}
        Z(x_1, \dots, x_t) =%
        \frac{1}{n^2} \sum_{i=1}^n v_i^2 + \frac{2}{n^2} \sum_{i < j}
        \ExCond{X_i X_j v_i v_j}{ \Event}.
    \end{math}
    The latter term is
    \begin{align*}
      \sum_{i < j}
      \ExCond{X_i X_j v_i v_j}{ \Event}
      &=%
        \sum_{i < j: i,j \leq t} x_i x_j v_i v_j
        +
        \sum_{i < j: i \leq t < j} \Ex{x_i X_j v_i v_j}
      +
      \sum_{i < j: t < i,j} \Ex{X_i X_j v_i v_j}
      \\&
      =%
      \sum_{i < j \leq t} x_i x_j v_i v_j,
    \end{align*}
    as $\Ex{X_i} = \Ex{X_iX_j} =0$.  Thus,
    \begin{math}
        Z(x_1, \dots, x_t) =%
        \frac{1}{n^2} \sum_{i=1}^n v_i^2 + \frac{2}{n^2}\sum_{i < j
           \leq t} x_i x_j v_i v_j.
    \end{math}
    The key observation is that
    \begin{equation*}
        Z(x_1, \ldots, x_{t})
        = \frac{Z(x_1, \ldots, x_{t},-1) +
           Z(x_1, \ldots,
           x_{t},+1)}{2}.
    \end{equation*}
    Our goal is to compute the assignment of $x_1, \ldots, x_n$
    that minimizes $Z$. Observe that
    \begin{equation*}
        D_t
        =%
        Z(x_1, \ldots, x_{t},+1) - Z(x_1, \ldots, x_{t})
        =%
        \frac{2}{n^2}\Bigl({\sum_{i <  t+1} x_i
           v_i} \Bigr) v_{t+1}.
    \end{equation*}
    If $D_t \leq 0$, then the algorithm sets $x_{t+1} = +1$, otherwise
    the algorithm sets $x_{t+1} =-1$.  The algorithm has to repeat
    this process for $t=1,\ldots, n$, and naively, each step takes
    $O( dn )$ time. Observe that if the algorithm maintains the
    quantity $V_t = \sum_{ i =1}^t x_i v_i$, then $D_t$ can be
    computed in $O(d)$ time. This determines the value of $x_{t+1}$,
    and the value of $V_{t+1} = V_t + x_{t+1} v_{t+1}$ can be
    maintained in $O(d)$ time. As each iteration takes $O(d)$ time,
    the algorithm overall takes $O(dn)$ time. By the end of this
    process, the algorithm have computed an assignment
    $x_1,\ldots, x_n$, with an associated partition of $U$ into $\P$
    and $\Q$.  By \Eqref{yey:2}, we have
    \begin{math}
        \displaystyle%
        \dY{\centroX{\P}}{ \centroX{\Q}}^2%
        \leq%
        \Ex{\bigl.\smash{\normX{Y }^2}} \leq { \diam^2}/{n}.
    \end{math}
\end{proof}

\subsection{A deterministic approximate Tverberg partition}
\seclab{halving:det}

\begin{lemma}
    \lemlab{main:b}
    Let $\P$ be a set of $n$ points in $\Re^d$, and
    $\epsA \in (0,1/4)$ be a parameter.  Then, one can compute, in
    $O(nd \log n)$ deterministic time, a partition of $\P$ into sets
    $\P_1, \ldots, \P_k$, and a ball $\ball$, such that
    \begin{compactenumi}
        \smallskip%
        \item $\forall i \ \cardin{\P_i} \leq 8/\epsA^2$, \smallskip%
        \item $\forall i \ \CHX{\P_i} \cap \ball \neq \emptyset$,
        \smallskip%
        \item $\radiusX{\ball} \leq \epsA \ts \diamX{\P}$, and
        \smallskip%
        \item $ k \geq n \epsA^2 / 8$.
    \end{compactenumi}
\end{lemma}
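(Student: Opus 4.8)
The plan is to build a balanced binary tree of subsets of $\P$ by repeatedly applying the deterministic halving scheme of \lemref{sample:mean:3}: the root is $\P$, and a node that is still too large is split into two (near-)equal halves, recursing on each. Once every leaf has size $O(1/\epsA^2)$, we output the leaves as $\P_1,\dots,\P_k$, take $\ball$ to be the ball centered at $\centroX{\P}$ of radius $\max_i \dY{\centroX{\P_i}}{\centroX{\P}}$, and prove this radius is $\le \epsA\ts\diamX{\P}$; then $\centroX{\P_i}\in\CHX{\P_i}\cap\ball$ gives (ii)--(iii), while counting the leaves gives (i) and (iv). Note $\diamX{\P}$ is never computed --- only $\centroX{\P}$ and the leaf centroids, at total cost $O(dn)$.

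If $n\le 8/\epsA^2$ we just output $k=1$, $\P_1=\P$, and the ball of radius $0$ centered at $\centroX{\P}$, which satisfies all four conditions. Otherwise fix the integer $L$ for which the common leaf size $s:=n/2^L$ lies in $[\,c_0/\epsA^2,\ 2c_0/\epsA^2)$ for a suitable absolute constant $c_0$; such an $L\ge 1$ exists because $n>8/\epsA^2\ge 2c_0/\epsA^2$, and then $s<8/\epsA^2$ yields (i) while $k=2^L=n/s\ge n\epsA^2/8$ yields (iv). When $n$ is a power of two the halvings are exact and \lemref{sample:mean:3} applies verbatim at each internal node (its size $n/2^j$ being even for $j<L$); for general $n$ one pads $\P$ with at most $2^L\ll n$ copies of $\centroX{\P}$ so that $2^L\mid\cardin{\P}$ --- such copies change neither the centroid nor the diameter of the set, and, spread evenly, add at most one point per leaf, whose later removal perturbs a leaf centroid by only a $1+O(\epsA^2)$ factor --- or, alternatively, replaces the exact halving of an odd-sized node by the near-equal split obtained from \lemref{s:m:d:fast}.

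For the radius bound, when a node $S$ at depth $j$ is split into equal halves $S_1,S_2$ we have $\centroX{S}=\tfrac12(\centroX{S_1}+\centroX{S_2})$, so each child's centroid lies within $\tfrac12\dY{\centroX{S_1}}{\centroX{S_2}}\le \tfrac{1}{\sqrt2}\diamX{\P}/\sqrt{\cardin{S}}=\diamX{\P}\,(\sqrt2)^{\,j}/\sqrt{2n}$ of $\centroX{S}$, using $\diamX{S}\le\diamX{\P}$, $\cardin{S}=n/2^j$, and \lemref{sample:mean:3}. Summing along a root-to-leaf path by the triangle inequality,
\[
\dY{\centroX{\P_i}}{\centroX{\P}}\le \frac{\diamX{\P}}{\sqrt{2n}}\sum_{j=0}^{L-1}(\sqrt2)^{\,j}<\frac{\diamX{\P}}{\sqrt{2n}}\cdot\frac{(\sqrt2)^{L}}{\sqrt2-1}=\frac{1}{2-\sqrt2}\cdot\frac{\diamX{\P}}{\sqrt{s}}\le \epsA\ts\diamX{\P},
\]
where the last step uses $s\ge c_0/\epsA^2$ with $c_0\ge 1/(2-\sqrt2)^2\approx 2.92$ (and $c_0\le 4$ so that $2c_0/\epsA^2\le 8/\epsA^2$); the remaining room between $2.92$ and $8$ is exactly what absorbs the $O(\epsA^2)$ losses from the divisibility/rounding handling above, which is also where $\epsA<1/4$ is used. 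The tree has $L=O(\log n)$ levels, at each level the sets partition $\P$, so the splits at that level cost $O(dn)$ by \lemref{sample:mean:3}, for $O(dn\log n)$ total, plus $O(dn)$ for the centroids. The proof has no deep step; the only delicate part is this bookkeeping --- matching the geometric series against $\epsA\ts\diamX{\P}$ while keeping the leaves of size $\le 8/\epsA^2$, and folding the rounding issues into the slack.
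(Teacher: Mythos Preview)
Your proposal is essentially the paper's own argument: build a binary halving tree via \lemref{sample:mean:3}, bound the root-to-leaf centroid drift by the same geometric series (the paper uses the slightly looser $1/(\sqrt{2}-1)\le 5/2$, arriving at $n/2^t\ge 3.2/\epsA^2$), and stop once the leaves have size $\Theta(1/\epsA^2)$; the running time analysis is also identical.

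The one place your write-up differs is the non-power-of-two handling, and there your padding sketch has a small gap: \lemref{sample:mean:3} gives you no control over which half a given point lands in, so you cannot simply assert the dummy copies of $\centroX{\P}$ end up ``spread evenly, at most one per leaf'' --- a priori they could all pile into a single leaf, and then removing them blows up the $(1+O(\epsA^2))$ perturbation factor. Your stated alternative (use \lemref{s:m:d:fast} to near-halve odd-sized nodes) does work, at the cost of a $\sqrt{2}$ loss in the per-level drift that the slack between $2.92$ and $8$ absorbs; the paper instead runs the whole tree on the largest power-of-two subset and distributes the $<n/2$ leftover points round-robin among the leaves afterward, which is the cleanest fix and avoids touching the geometric-sum constant.
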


\begin{proof}
    Assume for the time being that $n$ is a power of $2$. As done for
    discrepancy, we halve the current point set, and then continue
    doing this recursively (on both resulting sets), using the algorithm of
    \lemref{sample:mean:3} at each stage.  Conceptually, this is done
    in a binary tree fashion, and doing this for $i$ levels breaks the
    point set into $2^i$ sets. Let $\ell_i$ be an upper bound on the
    distance of the centroid of a set in the $i$\th level from the
    centroid of its parent.  By \lemref{sample:mean:3}, we have
    $2 \ell_i \leq \diam /\sqrt{n/2^i}$ (where $i=1$ in the top
    level). Thus, repeating this process for $t$ levels, we have that
    the distance of any centroid at the leaves to the global centroid
    is bounded by
    \begin{align}
        L_t
        =%
        \sum_{i=1}^t \ell_i
        &\leq
        \sum_{i=1}^t \frac{\diam}{2\sqrt{n/2^i}}
        =%
        \frac{\diam}{\sqrt{2n}}\sum_{i=0}^{t-1} \sqrt{2^i}
        =
        \frac{\diam}{\sqrt{2n}} \pth{\frac{ 2^{t/2} - 1 }{ \sqrt{2} - 1}}\nonumber\\
        &\leq
        \frac{5 \diam}{2 \sqrt{2n}}  2^{t/2}
        =
        \frac{5 \diam}{2 \sqrt{2} } \sqrt{\frac{1}{n/2^t}}.
        \eqlab{geometric:sum}%
    \end{align}
    Solving for
    \begin{math}
        \frac{5 }{2 \sqrt{2} } \sqrt{\frac{1}{n/2^t}} \leq \delta,
    \end{math}
    we get that this holds for
    \begin{math}
        n/2^t \geq {3.2 }/{\delta^2}.
    \end{math}
    We stop our halving procedure once \(t\) is large enough such that
    the preceding inequality no longer holds, implying the stated
    bound on the size of each set.

    If $n$ is not a power of $2$ then we apply the above algorithm to
    the largest subset that has size that is a power of two, and then
    add the unused points in a round robin fashion to the sets
    computed.
\end{proof}

\begin{remark}
    \remlab{derandomize:2}%
    If instead of keeping both halves, as done by the algorithm of
    \lemref{main:b}, one throws one of the halves away, and repeats
    the halving process on the other half, we end up with a single
    sample.  One can repeat this halving process until the ``sample''
    size is $\Theta(1/\delta^2)$.  Using the same argument as in
    \Eqref{geometric:sum} to bound the error, we obtain a sample
    $\Sample$ of size $\Theta(1/\delta^2)$, such that
    $\dY{\centroX{\Sample}}{\centroX{\P}} \leq \epsA \,
    \diamX{\P}$. The running time is $\sum_i O(dn/2^i) =
    O(dn)$. Namely, we get a deterministic $O(dn)$ time algorithm that
    computes a sample with the same guarantees as \lemref{s:m:d:fast}
    -- this version is somewhat less flexible and the constants are
    somewhat worse.
\end{remark}

\section{Conclusions}
Given a data set, archetypal analysis
\cite{cmh-fraa-14}
aims to identify a small subset
of points such that all (or most) points in the data can be
represented as a sparse convex-combination of these
``archtypes''. Thus, for a sparse convex-combination of points, generating
a point can be viewed as an ``explanation'' of how it is being induced
by the data. It is thus natural to ask for as many \emph{independent}
explanations as possible for a point -- the more such combinations,
the more a point ``arises'' naturally from the data.  Thus, an
approximate Tverberg partition can be interpreted as stating that high
dimensional data has certain points (i.e., the centroid) that are
robustly generated by the data.

From a data-analysis point of view, an interesting open question is
whether one can do better than the ``generic'' guarantees provided
here. If, for example, a smaller radius centroid ball exists, can it be
approximated efficiently? Can a sparser convex-combination of points be
computed efficiently?

While these questions in the most general settings seem quite
challenging, even solving them in some special cases might be
interesting.

In addition, prior works consider other no-dimensional results, such
as a no-dimensional version of Helly's theorem \cite{abmt-tchtw-20},
and a no-dimensional version of the colorful Tverberg theorem
\cite{cm-ndtta-22}.  Our work did not address these problems because
of the focus on simplicity, and a possible further direction is to
address these variants with extensions of the techniques used here.

   \paragraph*{Acknowledgements}%

   The authors thank Ken Clarkson and Sandeep Sen for useful
   discussions. The authors thank the anonymous referees for their
   detailed comments.  Work by both authors was partially supported by
   NSF AF award CCF-2317241.

\BibTexMode{%
   \RegVer{%
      \bibliographystyle{alpha}%
   }%
   \bibliography{fat_separator}%
}%
\BibLatexMode{\printbibliography}

\appendix

\section{No-dimensional \Caratheodory theorem}
\apndlab{caratheodory}

The following result is well known \cite{m-ldg-02}, and even the
following proof is probably known. We include the proof because it is
quite elegant, and for the sake of completeness.

\begin{lemma}
    \lemlab{caratheodory}%
    Let $\P$ be a set of points in $\Re^d$, and let $\epsA$ be a
    parameter. Let $\p$ be any point in the convex-hull of $\P$. Then,
    there exists a point $\q \in \CHX{\P}$, such that $\q$ is the
    convex combination of at most $\ceil{\smash{1/(2\epsA^2)}}$ points
    of $\P$, such that $\dY{\p}{\q} \leq \epsA \diam$, where
    $\diam= \diamX{\P}$.
\end{lemma}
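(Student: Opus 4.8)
The plan is to run the sampling argument behind \lemref{sample:mean:w:r}, but with a non-uniform distribution obtained from the convex-combination weights of $\p$. Since $\p \in \CHX{\P}$, write $\p = \sum_{\p_i \in \P} \lambda_i \p_i$ with $\lambda_i \ge 0$ and $\sum_i \lambda_i = 1$, and regard $(\lambda_i)$ as a probability distribution on $\P$. Set $r = \ceil{\smash{1/(2\epsA^2)}}$, draw $\q_1,\ldots,\q_r$ independently from this distribution, and let $\q = \tfrac1r\sum_{j=1}^r \q_j$. Then $\q \in \CHX{\P}$ and, after merging repeated samples, $\q$ is a convex combination of at most $r$ points of $\P$; so it suffices to exhibit one realization with $\dY{\p}{\q} \le \epsA\diam$.

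Next I would compute the first two moments. By linearity, $\Ex{\q} = \tfrac1r\sum_j \Ex{\q_j} = \sum_i \lambda_i \p_i = \p$. Writing $\q - \p = \tfrac1r\sum_{j=1}^r(\q_j - \p)$, where the summands are i.i.d.\ with mean $0$, all cross terms vanish in expectation (by independence), giving
\[
    \Ex{\normX{\q-\p}^2} = \frac{1}{r^2}\sum_{j=1}^r \Ex{\normX{\q_j-\p}^2} = \frac1r\,\Ex{\normX{\q_1-\p}^2}.
\]
The point is to bound $\Ex{\normX{\q_1-\p}^2}$ by $\diam^2/2$, not the trivial $\diam^2$. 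For this I use the identity behind \lemref{side:show}: if $\q_1'$ is an independent copy of $\q_1$, then since $\p = \Ex{\q_1}$,
\[
    \Ex{\normX{\q_1-\q_1'}^2} = 2\,\Ex{\normX{\q_1}^2} - 2\normX{\p}^2 = 2\,\Ex{\normX{\q_1-\p}^2},
\]
and $\normX{\q_1-\q_1'} \le \diamX{\P} = \diam$ because $\q_1,\q_1' \in \P$. Hence $\Ex{\normX{\q_1-\p}^2} \le \diam^2/2$.

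Combining, $\Ex{\normX{\q-\p}^2} \le \diam^2/(2r) \le \epsA^2\diam^2$ by the choice of $r$. Therefore some realization of $\q_1,\ldots,\q_r$ satisfies $\normX{\q-\p}^2 \le \epsA^2\diam^2$, i.e.\ $\dY{\p}{\q} \le \epsA\diam$, and this $\q$ is a convex combination of at most $r = \ceil{\smash{1/(2\epsA^2)}}$ points of $\P$, as required. The only mild obstacle is the factor-$\sqrt2$ saving: bounding $\normX{\p_i-\p}$ directly by $\diam$ yields only $\sqrt2\,\epsA\diam$, so one must pass through the ``variance equals half the mean squared pairwise distance'' identity, which conveniently involves only points of $\P$ and so sidesteps any worry about $\diamX{\CHX{\P}}$ versus $\diamX{\P}$. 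Everything else is a two-line second-moment computation.
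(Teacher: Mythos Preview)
Your proof is correct and follows essentially the same approach as the paper: sample $r$ points from the distribution whose mean is $\p$, bound $\Ex{\normX{\q-\p}^2}$ by $\diam^2/(2r)$ via the ``variance equals half the mean squared pairwise distance'' identity (which is exactly the content of \lemref{side:show}), and invoke the probabilistic method. The only cosmetic difference is that the paper assumes rational weights, forms a multiset, and samples without replacement (so it can quote \Eqref{y:squared} verbatim), whereas you sample i.i.d.\ from the weight distribution directly---your route is slightly cleaner since it avoids the rationality simplification.
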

\begin{proof}
    Let $\P = \{\p_1, \ldots, \p_n\}$ and assume that
    $\p = \sum_{i=1}^n \alpha_i \p_i$. For simplicity, further assume that
    $\alpha_i = x_i/M$ for every \(i\), where the $x_i$'s and $M$ are
    integers. We also assume that $\p$ is at the origin (otherwise, translate $\P$
    so that it holds).

    Let $\Q$ be the multiset of cardinality $M$, where, for
    $i=1,\ldots, n$, a point $\p_i \in \P$ appears $x_i$
    times. Clearly, $\p$ is the centroid of $\Q$.  Picking a random
    sample $\Sample$ from $\Q$ of size $r$, and setting
    $Y = \sum_{\p \in \Sample} \p$, we have $\centroX{\Sample} = Y/r$.
    Since
    $0 \leq \Var{\normX{Y}} = \Ex{ \smash{\normX{Y}^2}} -
    \Ex{\normX{Y}}^2$, we have
    \begin{align*}
        \Ex{\dY{\p}{\centroX{\Sample}}}
        &=%
        \Ex{\normX{Y}/r}
        =
        \sqrt{
           \Ex{\normX{Y}}^2/r^2}\\
        &\leq%
        \sqrt{
           \Ex{\smash{\normX{Y}^2}}\bigr.}/r
        \leq
        \sqrt{
           r \avgp^2 \bigr.}/r
        =%
        \avgp/\sqrt{r}
        \leq
        {\diam}/{\sqrt{2r }}
        \leq
        \epsA \diam,
    \end{align*}
    by \Eqref{y:squared}, \lemref{side:show}, and setting
    $r \geq \ceil{\smash{1/(2\epsA^2)}}$.  In particular, there exists
    a choice of $\Sample$, such that $\dY{\p}{\centroX{\Sample}}$ is
    no larger than the expectation, which implies the claim.
\end{proof}

\section{Proof of \TPDF{\lemref{s:m:d:fast}}{Lemma 4.2}}
\apndlab{derandomized_running_time}

\RestatementOf{\lemref{s:m:d:fast}}{\LemmaSMDBody}

\begin{proof}
    We follow the argument of \lemref{s:m:d:slow}.  The algorithm
    computes the following prefix sums
    \begin{equation*}
        P_i =\sum_{k=1}^i p_k,
        \qquad%
        Q_i =\sum_{k=1}^i p_k^2,
        \qquad
        P^x_i =\sum_{k=1}^i x_k p_k,
        \qquad\text{and}\qquad%
        Q^x_i =\sum_{k=1}^ix_k p_k^2.
    \end{equation*}
    Specifically, in $O(dn)$ time, it precomputes $P_i$ and $Q_i$, for
    $i=1, \ldots, n$.  The algorithm also maintains the prefix
    sums $P^x_i$ and $Q^x_i$, for $i=1,\ldots, t-1$. As the next value
    of $x_{t}$ is determined, in the $t$\th iteration, the algorithm
    computes the two new prefix sums $P^x_{t}$ and $Q^x_{t}$ in
    $O(d)$ time.

    Recall that, in the $t$\th iteration, the algorithm has already
    determined the values of $x_1,\ldots, x_{t-1}$, and the value of
    $x_t$ is speculatively set by the algorithm (to either $0$ or $1$,
    this is done twice for both values), and it needs to compute the
    quantity
    \begin{equation*}
        \beta%
        =%
        Z(x_1, \dots, x_t) =%
        B + 2 C,
    \end{equation*}
    where
    \begin{equation*}
        B =
        \sum_{i=1}^{n} \ExCond{I_i}{\Event} \p_i^2
        \quad\text{and}\quad
        C = \sum_{i<j}
        \ExCond{I_i{} I_j}{\Event} \p_i \p_j,
    \end{equation*}
    see \Eqref{compute:beta}.  To this end, the algorithm first
    computes $P_t^x$ and $Q_t^x$ in $O(d)$ time.

    Now, recall that for $\alpha = \sum_{k=1}^t x_k$, we have
    \begin{math}
        \ExCond{I_i}{\Event}
        =
        \begin{cases}
          x_i & i \leq t\\
          \frac{r-\alpha}{n-t} & i >t,
        \end{cases},
    \end{math}
    see \Eqref{cond:expect:f}. This implies that
    \begin{equation*}
        B
        =%
        \sum_{i=1}^t x_i p_i^2 + \sum_{i=t+1}^n
        \frac{r-\alpha}{n-t}p_i^2
        =%
        Q_t^x
        +
        \frac{r-\alpha}{n-t} (Q_n - Q_t).
    \end{equation*}
    As such, $B$ can be computed in $O(d)$ time from the
    maintained prefix sums. \Eqref{cond:expect:f} also states that
    \begin{equation*}
    	\ExCond{I_i {}I_j}{\Event} %
        =%
        \begin{cases}
          x_i x_j
          &
            i < j \leq t
          \\%%
          x_i \frac{r-\alpha}{n-t}
          &
            i \leq t <j
          \\[0.1cm]%
          \frac{(r-\alpha)(r-\alpha-1)}{(n-t)(n-t-1)}
          &
            t <i < j,
        \end{cases}
    \end{equation*}
    which implies
    \begin{align*}
      C &=
          \underbrace{
          \sum_{i=1}^{t-1} x_i \p_i
          \sum_{j=i+1}^t
          x_j  \p_j
          }_{L_t}
          +
          \frac{r-\alpha}{n-t}
          \underbrace{\sum_{i=1}^{t} x_i p_i \sum_{j =t+1}^n  \p_j}_{M_t}
      +
      \frac{(r-\alpha)(r-\alpha-1)}{(n-t)(n-t-1)}
      \underbrace{
      \sum_{i=t+1}^{n-1} \p_i \sum_{j=i+1}^n
      \p_j}_{R_t}.
    \end{align*}

    Observe that
    \begin{align*}
      L_t
      &=%
        L_{t-1} + x_t p_t \sum_{i=1}^{t-2} x_i \p_i
        + x_{t-1}p_{t-1} x_t  \p_t
        =%
        L_{t-1} + x_t p_t P^x_{t-2}
        + x_{t-1}p_{t-1} x_t  \p_t.
    \end{align*}
    Namely, the quantity $L_{t}$ can be computed in $O(d)$ time (from
    $L_{t-1}$). Similarly, $M_t$ can be computed, in $O(d)$ time, as
    \begin{equation*}
      M_t
      =%
    	\sum_{i=1}^{t}
    	x_i \p_i
    	\sum_{j=t+1}^n
    	\p_j
        =%
    	P_t^x (P_n - P_t).
    \end{equation*}
    The final quantity to consider is
    \begin{align*}
      R_{t}
      &=%
        \sum_{i=t+1}^{n-1} \p_i ( P_n - P_i)
        =%
        p_{t+1} (P_n - P_{t+1})
        +
        \sum_{i=t+2}^{n-1} \p_i ( P_n - P_i)
        =
        p_{t+1} (P_n - P_{t+1}) + R_{t+1}.
    \end{align*}
    Namely, the algorithm can precompute the values of
    $R_n, R_{n-1}, \ldots, R_1$ in $O(d n)$ time overall. Thus,
    from the above quantities, \(C\) can be computed in \(O(d)\)
    time.

    We conclude that $\beta$ can be computed in $O(d)$ time in the
    $t$\th iteration. This calculation is done twice for $x_t=0$ and
    $x_t=1$, and the algorithm determines the value of $x_t$ by
    picking the value that minimizes $\beta$. Thus, doing this for
    \(t = 1, \ldots, n-1\), the overall running time of the algorithm
    is $O(d n)$.
\end{proof}

\end{document}